\numberwithin{equation}{section}
\newtheorem{thm}{Theorem}[section]
\newtheorem{Lemma}{Lemma}[section]
\newtheorem{Def}{Definition}[section]
\newtheorem{rem}{Remark}[section]
\newcommand\blfootnote[1]{%
	\begingroup
	\renewcommand\thefootnote{}\footnote{#1}%
	\addtocounter{footnote}{-1}%
	\endgroup
}
\begin{document}
	
\title{Modeling the dynamics of the Hepatitis B virus via a variable-order discrete system\blfootnote{This
is a preprint whose final form is published in \emph{Chaos, Solitons and Fractals} 
\url{https://doi.org/10.1016/j.chaos.2024.114987}.}}

\baselineskip 16pt

\author{Meriem Boukhobza$^{a,}$\footnote{marboukhobza@gmail.com}\ ,  
Amar Debbouche$^{b,d,}$\footnote{amar\_debbouche@yahoo.fr}\ ,\\ 
Lingeshwaran Shangerganesh$^{c,}$\footnote{shangerganesh@nitgoa.ac.in}
\ and\  
Delfim F. M. Torres$^{d,}$\footnote{delfim@ua.pt}
\medskip \\
{\small $^a$Department of Mathematics and Informatics, University of Mostaganem, Mostaganem 27000, Algeria}\\
{\small $^b$Department of Mathematics, Guelma University,  Guelma 24000, Algeria}\\
{\small $^c$Department of Applied Sciences, National Institute of Technology Goa, Goa 403703, India}\\
{\small $^d$R\&D Unit CIDMA, Department of Mathematics, University of Aveiro, Aveiro 3810-193, Portugal}}

\date{\empty}

\maketitle	


\begin{abstract}
We investigate the dynamics of the hepatitis B virus by integrating 
variable-order calculus and discrete analysis. 
Specifically, we utilize the Caputo variable-order difference operator 
in this study. To establish the existence and uniqueness results of the model, 
we employ a fixed-point technique. Furthermore, we prove that the model exhibits 
bounded and positive solutions. Additionally, we explore the local stability 
of the proposed model by determining the basic reproduction number. Finally, 
we present several numerical simulations to illustrate the richness of our results.
\end{abstract}
\textbf{Keywords:}  Hepatitis B virus model; 
variable-order calculus; discrete analysis; 
local stability;  numerical simulations.


\section{Introduction}

The body experiences a diminished immune response upon exposure to viruses, 
particularly viral hepatitis, as a defense mechanism to protect the liver. 
There are various types of hepatitis, including hepatitis A, hepatitis B, 
hepatitis C, hepatitis D, and hepatitis E. Our paper specifically 
concentrates on hepatitis B. The hepatitis B virus is recognized as one 
of the most harmful viruses, significantly contributing to the development 
of liver cancer \cite{csener2018investigation}. Its rapid transmission 
is evident, with approximately $80\%$ of hepatitis B cases attributed to 
infections transmitted from person to person through blood and sexual contact
\cite{pol2005epidemiologie,simelane2021fractional}. Horizontal transmission 
from mother to fetus is also documented 
\cite{khan2021transmission,thornley2008hepatitis}.

Hepatitis B manifests in two forms: acute and chronic. Many carriers of the virus, 
especially in chronic cases, may remain asymptomatic. Symptoms, when present, 
typically emerge between two weeks to six months after infection. Symptoms of 
acute hepatitis include fatigue, fever, muscle and joint pain, nausea, vomiting, 
loss of appetite, weight loss, abdominal pain, jaundice, pale stool, dark urine, 
itchy skin, and an overall feeling of fatigue and malaise. Symptoms of chronic 
hepatitis may include blood in the stool or vomit and swelling of the lower extremities. 
Additionally, the skin may become yellow, and the whites of the eyes may also turn yellow.
The prevalence of Hepatitis B is widespread globally, as highlighted in 
\cite{libbus2009public,williams2006global}. Despite its global reach, there 
exists an effective vaccination to prevent infection, as emphasized 
in \cite{maynard1989global,shepard2006hepatitis}.

Mathematical epidemiology is a specialized field dedicated to exploring 
the dynamics of disease transmission. The dynamics of hepatitis viruses 
have been extensively studied through mathematical models, as demonstrated by numerous examples
\cite{wang2010global,safi2011effect,rachah2015mathematical,rodrigues2014vaccination,rodrigues2014cost}.
In a seminal study, Anderson et al. \cite{anderson1992may} presented a simple model to examine the 
influence of carriers on the transmission of hepatitis B. The transmission dynamics and control of the 
hepatitis B virus were modeled in \cite{zou2010modeling}, while a model forecasting a mechanism for 
eliminating hepatitis B was proposed in \cite{williams1996transmission}. Similar concepts 
were explored in \cite{medley2001hepatitis}.
Control analysis using an SIR epidemic model was suggested in 
\cite{zhao2000mathematical}. Further investigation of epidemic models 
with control strategies was conducted in \cite{bakare2014optimal,kamyad2014mathematical}. 
A model studying multiple endemic solutions was developed in \cite{onyango2014multiple}. 
Similarly, the dynamics of hepatitis B were explored in \cite{zhang2015modeling}.
Recent research, exemplified by works such as 
\cite{khan2016classification,khan2017transmission,nana2017hepatitis,debbouche2}, 
has developed epidemiological models to investigate the influences of various parameters 
on disease transmission and to propose control measures for infection elimination.

In these investigations, differential equations with integer orders were initially employed. 
However, it became evident over time that these models were insufficient for comprehending 
complex biological systems. Consequently, there has been a shift towards mathematical models 
with fractional orders, a trend that has recently gained prominence. The utilization of 
fractional derivatives and fractional integrals has found numerous applications 
in applied sciences and engineering. Many classical models have demonstrated limited 
accuracy in predicting the future dynamics of a system. In contrast, models incorporating 
fractional orders have proven more effective in capturing and retaining missing 
information \cite{abbasc,abbmma,atangana2020modelling,debbouche1,debbouche}. 
It is worth noting that classical derivatives may not adequately capture the dynamics 
between two distinct points \cite{samko1993fractional,baleanu2010new}.

Recent efforts have been directed towards enhancing discrete fractional calculus, 
as demonstrated in \cite{almatroud2023variable}. These advancements underscore the 
growing importance of discrete fractional calculus, as evidenced by works such as 
\cite{khan2021stability,baleanu2010new,atici2007transform}. Variable-order calculus 
is recognized as a natural extension of classical calculus \cite{bushnaq2023existence}, 
with foundational work in this area dating back to 1993 by Samko and his co-authors 
\cite{doi:10.1080/10652469308819027}. Subsequently, variable-order problems have 
found applications in fields such as photoelasticity \cite{soon2005variable}, 
and the stability and convergence of novel explicit finite-difference approaches 
for variable-order nonlinear fractional diffusion equations have been studied 
\cite{lin2009stability}. Various numerical schemes for variable-order problems 
have been developed, as referenced in \cite{bushnaq2022computation} 
and \cite{shah2022spectral}. Moreover, existence theories for variable-order problems 
have been established \cite{xu2013existence,razminia2012solution}. Given that 
variable-order operators extend classical ordinary and fractional orders, their 
utilization provides sophisticated tools for studying the dynamical systems 
of infectious diseases \cite{bushnaq2023existence}.

Building upon the aforementioned motivation and work, here we introduce
a mathematical model for the dynamics of the hepatitis B virus, 
employing fractional variable-order derivatives. The model encompasses 
populations of susceptible ($S(t)$), acute ($A(t)$), 
immune ($M(t)$),
chronic ($C(t)$), 
recovered ($R(t)$), and
vaccinated ($V(t)$), while
incorporating discrete fractional variable-order time derivatives. 
It is expressed as follows:
\begin{eqnarray}
\label{e1}
\left\{
\begin{array}{llllll}
\Delta^{\alpha(t)}S(t) & = &\delta d(1-rC)+wV-(\mu_{1}+\beta M+n\beta C+k)S, \\
\Delta^{\alpha(t)}A(t) & = &(\beta M+n\beta C)S-(\mu_{1}+\gamma)A, \\
\Delta^{\alpha(t)}M(t) & = &\gamma A+(\delta dr-\mu_{1}-\mu_{2}-a-m )M, \\
\Delta^{\alpha(t)}C(t) & = &aM -eC-\mu_{1}C, \\
\Delta^{\alpha(t)}R(t) & = &mM +eC-\mu_{1}R, \\
\Delta^{\alpha(t)}V(t) & =& \delta(1-d)+kS-\mu_{1}V-wV , \\
\end{array} \right.
\end{eqnarray}
where the initial conditions are given as 
$S(0)=S_{0}$, $A(0)=A_{0}$, $M(0)=M_{0}$, $C(0)=C_{0}$,
 $R(0)=R_{0}$ and $V(0)=V_{0}$, and where we use the fractional-order 	
$\alpha(t)\in(0,1)$. The unknown variables  
$(S(t),A(t),M(t),C(t),R(t),V(t))$ are in $ \mathbb{R}_{+}^{6}$. 
The total population is given by
$B(t)=S(t)+A(t)+M(t)+C(t)+R(t)+V(t)$. All the parameters 
$(\delta ,\mu_{1} ,\mu_{2} ,\beta ,\gamma ,a ,e,k,d,r,w,n,m )$ 
are positive real numbers and their values are provided in Table~\ref{table1}. 
Here, the delta variable-order difference of model \eqref{e1} 
is given in the sense of Caputo, where $\alpha(t)\in(0,1)$.

The central concept of the presented model revolves around the integration of discrete 
fractional variable-order time derivatives, representing a novel class of fractional 
derivatives with wide-ranging real-world applications. The proposed 
model explores the impact of different phases of infected individuals and various 
transmission routes on the dynamics of a hepatitis B virus model utilizing discrete 
time-fractional derivatives. This model provides significant advantages in 
comprehending the transmission dynamics of hepatitis B virus within the human population.
\begin{table}
\caption{Parameter values for our model \eqref{e1}, borrowed from \cite{yavuznew}.\label{table1}}
\[ 
\begin{array}{lccc}\\
\hline
\textrm	{Parameters} &  &\textrm {Parameter}\textrm	{Value}  \\ \hline
\delta  & \textrm {Birth Rate}	 &  0.0121  \\ 
\mu_{1}    & \textrm {Natural mortality rate}	 & 0.000034857 \\
\mu_{2}    & \textrm {Hepatitis-B related mortality rate}	 & 0.1019\\
\beta  & \textrm {Transmission coefficient of the disease}	 &  0.00014334    \\
\gamma   & \textrm {Transition rate from Latent population to Acute population} & 0.1989 \\
a   & \textrm { Transition rate of individuals with Acute infection
to carrier-class}	 &0.3387   \\
e   & \textrm {Recovery rate of individuals in the carrier class}	 &  0.0741   \\
k   &\textrm {Vaccination rate}	 & 0.8569\\
d  & \textrm {Rate of births without successful vaccination}	 & 0.00043102\\
r   & \textrm {Infected rate of mothers with HB Acute virus}	 & 0.0137  \\
w     & \textrm {The rate of decrease in immunity with the effect of vaccine}	 &0.9472\\
n    &  \textrm {Reduced transmission rate compared to Acute}	 & 0.7534\\
m  &  \textrm {Recovery rate of individuals with Acute infection}	 &  0.0277\\ \hline
\end {array} \]
\end{table} 

The paper is structured as follows. 
In Section~\ref{sec2} we recall necessary notions and results from the literature
needed in the sequel. In Section~\ref{sec:3}, we delve into the conditions 
for the existence and uniqueness of solutions as well as their boundedness and 
positivity. Section~\ref{sec:4} establishes the stability of the equilibrium points
while in Section~\ref{sec:5} we present some numerical results. 
We end the paper with Section~\ref{sec:6} of conclusion.


\section{Mathematical preliminaries}
\label{sec2}

Here, we introduce some definitions and notations from the papers 
\cite{huang2021discrete,ortigueira2019variable,abdeljawad2019variable,amarmat}. 
We denote $\mathbb{N}_a$ and $\mathbb{N}^T_a$ as  
$\mathbb{N}_a =\{a,a+1,a+2,\ldots\}$, $\mathbb{N}^T_a =\{a,a+1,a+2,\ldots,T\}$.

\begin{Def}
\label{Definition 2.1}  
Let $\alpha(t)>0$ and $\sigma(s)=s+1$. For $ u(t) $ defined on $ \mathbb{N}_a $, the
delta variable-order fractional sum of order $\alpha(t)$ is defined by 	
\begin{eqnarray}
\Delta^{-\alpha(t)}_{a}u(t)=\frac{1}{\Gamma(\alpha(t))}
\sum_{s=a}^{t-\alpha(t)}(t-\sigma(s))^{(\alpha(t)-1)}u(s),\label{2.1}  
\end{eqnarray}
where $t^{(\alpha(t))}$ is the discrete factorial functional given by 
$t^{(\alpha(t))}=\frac{\Gamma(t+1)}{\Gamma(t-\alpha(t)+1)}$.
\end{Def}

\begin{Def}
\label{Definition 2.1.1}
For $u(t)$ defined on $ \mathbb{N}_a, \alpha(t)>0, \alpha\notin\mathbb{N}$, 
the delta Caputo variable-order fractional difference is defined by
\begin{eqnarray}
^C \Delta^{\alpha(t)}_{a}u(t)=\Delta^{-(m-\alpha(t))}_{a}\Delta^{m}u(t)
=\frac{1}{\Gamma(m-\alpha(t))}\sum_{s=a}^{t
-(m-\alpha(t))}(t-\sigma(s))^{(m-\alpha(t)-1)}\Delta^{m}u(s),
\label{2.1.1}  
\end{eqnarray}
where $t\in \mathbb{N}_{a+m-\alpha(t)}$, $m=[\alpha(t)]+1$.
Note that the forward difference operator is defined by $\Delta u(t)=u(t+1)-u(t)$.
\end{Def}

\begin{thm}[See \cite{ghaziani2016stability,petravs2011fractional}]
Consider the following fractional variable-order discrete system:
\begin{equation}
\label{p1}
\Delta^{\alpha(t)}x =f(x), \quad x(0)=x_0  
\end{equation}
with $ x \in \mathbb{R}^{n} $ and $ \underline{\alpha}=\inf \alpha(t)$, 
$\overline{\alpha}=\sup \alpha(t)$, 
$0 < \underline{\alpha}  < \alpha(t) < \overline{\alpha}< 1$.
The equilibrium points of the system \eqref{p1} are solutions to the equation 
$ f(x)= 0$. An equilibrium is locally asymptotically stable if all 
the eigenvalues $ \lambda_i $ $(i=1,2 ,3,\ldots,n ) $ 
of the Jacobian matrix $ J=\Delta f $ evaluated at the equilibrium satisfy
\begin{equation}
\vert \arg(\lambda_i)  \vert <\dfrac{\pi}{2} \underline{\alpha}.
\end{equation}
On the other hand, if $ \vert \arg(\lambda_i)  \vert >\dfrac{\pi}{2} \overline{\alpha}$, 
then the equilibrium point is unstable. 
\end{thm}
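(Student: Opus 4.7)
My strategy is to reduce the nonlinear stability question to a Matignon-type sector condition on the spectrum of the Jacobian, adapted to the variable-order discrete setting, and then handle the variable-order aspect through a comparison with the two extreme constant orders $\underline{\alpha}$ and $\overline{\alpha}$.

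\emph{Linearization.} First, I would fix an equilibrium $x^{*}$ with $f(x^{*})=0$ and set $y(t)=x(t)-x^{*}$. Since the Caputo variable-order difference annihilates constants (it acts on $\Delta y$ in Definition~\ref{Definition 2.1.1}), a Taylor expansion of $f$ about $x^{*}$ gives
\begin{equation*}
\Delta^{\alpha(t)}y(t)=Jy(t)+g(y(t)),\qquad J=Df(x^{*}),
\end{equation*}
with $\|g(y)\|=o(\|y\|)$. A standard discrete-fractional linearization argument then reduces the local behavior near the origin to the linear system $\Delta^{\alpha(t)}y=Jy$.

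\emph{Spectral reduction.} Next, by passing to a (real) Jordan form of $J$, the linear problem decouples into scalar prototypes $\Delta^{\alpha(t)}z(t)=\lambda z(t)$, one for each eigenvalue $\lambda$. For constant $\alpha\in(0,1)$, such equations are solved explicitly via the $\mathcal{Z}$-transform, the solution being expressible through a discrete Mittag--Leffler function whose asymptotic decay is governed by the position of $\lambda$ relative to a sector of the complex plane whose opening is a monotone function of $\alpha$. This is the discrete analogue of Matignon's classical result.

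\emph{From constant to variable order.} Because $\alpha(t)\in(\underline{\alpha},\overline{\alpha})$, the admissible sector evolves with $t$. I would pointwise dominate the discrete kernel $(t-\sigma(s))^{(m-\alpha(t)-1)}$ appearing in Definition~\ref{Definition 2.1.1} by its analogues built from $\underline{\alpha}$ and $\overline{\alpha}$, and then invoke a discrete Gronwall inequality to bound the variable-order fundamental solution between the two constant-order fundamental solutions. The worst case for decay corresponds to $\underline{\alpha}$, yielding the sufficient condition $|\arg\lambda_{i}|<\pi\underline{\alpha}/2$ for asymptotic stability; the best case for growth corresponds to $\overline{\alpha}$, so that if even the largest admissible sector fails to contain $\lambda_{i}$, i.e.\ $|\arg\lambda_{i}|>\pi\overline{\alpha}/2$, then instability follows.

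\emph{Main obstacle.} The delicate step is the last one: the $\mathcal{Z}$-transform and Mittag--Leffler machinery is genuinely tailored to constant order, so rigorously transferring the sector criterion to the variable-order case requires uniform kernel estimates and a careful treatment of the ambiguous wedge $\pi\underline{\alpha}/2\le|\arg\lambda_{i}|\le\pi\overline{\alpha}/2$ where the theorem makes no claim. I expect the bulk of the technical work to lie here, and at this point an appeal to the established results in \cite{ghaziani2016stability,petravs2011fractional} would close the argument.
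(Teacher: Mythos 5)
The paper does not prove this statement at all: it is imported verbatim from \cite{ghaziani2016stability,petravs2011fractional} and used as a black box, so there is no in-paper argument to compare yours against. Judged on its own merits, your plan contains two genuine gaps beyond the one you flag yourself. First, the ``spectral reduction'' step already fails at constant order: for the delta Caputo difference, the scalar equation $\Delta^{\alpha}z=\lambda z$ does \emph{not} obey a pure Matignon sector criterion. The known asymptotic-stability region (C\v{e}rm\'ak--Gy\H{o}ri--Nechv\'atal type results) is a \emph{bounded} set, cut out by both an argument condition and a modulus condition of the form $|\lambda|<\bigl(2\cos\frac{|\arg\lambda|-\pi}{2-\alpha}\bigr)^{\alpha}$; the discrete Mittag--Leffler asymptotics are not ``governed by the position of $\lambda$ relative to a sector'' alone, so your reduction to a sector test is not a faithful discrete analogue of Matignon's theorem. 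Second, the comparison step ``pointwise dominate the kernel by its analogues for $\underline{\alpha}$ and $\overline{\alpha}$, then Gronwall'' does not work as stated: the kernels $(t-\sigma(s))^{(\alpha-1)}$ for different orders do not dominate one another uniformly in $s$ and $t$ with a fixed sign, the fundamental solutions of fractional difference equations are not monotone in $\alpha$, and a discrete Gronwall inequality only yields one-sided upper bounds on norms --- it cannot produce the lower bound on growth needed for the instability half of the claim.

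None of this is fatal to the enterprise, because the honest conclusion of your own sketch is the same as the paper's: the statement is ultimately taken on the authority of the cited references rather than proved. But if you intend the sketch to be more than a restatement of that citation, the two steps above are where it breaks, and the second one (a genuine two-sided comparison principle between variable-order and constant-order discrete fractional dynamics) is, to my knowledge, not available in the literature in the form you would need.
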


\begin{thm}[See \cite{almatroud2023variable}]
Let $s\in \mathbb{N}_{a+1}$. The following holds: 
\begin{equation}
\sum_{k=a+1}^{s}(s-k+1)^{\alpha(s)-1}
=\dfrac{(s-a)^{\alpha(s)}}{\alpha(s)}.
\end{equation}
\end{thm}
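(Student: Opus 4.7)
My plan is to read the claimed equality as the discrete analog of $\int_{0}^{s-a} t^{\alpha-1}\,dt = (s-a)^{\alpha}/\alpha$ and to prove it by telescoping. Because $\alpha(s)$ is independent of the summation index $k$, I treat it as a constant $\alpha$ throughout.

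The first step is to establish the discrete power rule $\Delta_{t}\, t^{(\alpha)} = \alpha\, t^{(\alpha-1)}$. Starting from $t^{(\alpha)} = \Gamma(t+1)/\Gamma(t-\alpha+1)$ and using $\Gamma(t+2) = (t+1)\Gamma(t+1)$, a short computation gives $(t+1)^{(\alpha)} - t^{(\alpha)} = \bigl(\Gamma(t+1)/\Gamma(t-\alpha+1)\bigr)\cdot \alpha/(t-\alpha+1)$, which after applying $\Gamma(t-\alpha+2) = (t-\alpha+1)\Gamma(t-\alpha+1)$ collapses to $\alpha\, t^{(\alpha-1)}$. Rearranged, this furnishes the antidifference needed for our summand:
\[
(s-k+1)^{(\alpha-1)} \;=\; \frac{1}{\alpha}\bigl[(s-k+2)^{(\alpha)} - (s-k+1)^{(\alpha)}\bigr],
\]
so that every term of the sum is a forward first-difference in $k$ (with a sign flip induced by $k\mapsto -k$).

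The second step is to apply the discrete fundamental theorem. After the substitution $j = s-k+1$, the sum becomes $\frac{1}{\alpha}\sum_{j=1}^{s-a}\bigl[(j+1)^{(\alpha)} - j^{(\alpha)}\bigr]$, which telescopes to a single boundary expression. A clean alternative is induction on $s \in \mathbb{N}_{a+1}$: the base case $s = a+1$ is a one-term Gamma check, and the inductive step $s \to s+1$ isolates the newly added term $1^{(\alpha-1)}$ and uses the recurrence $(s+1-a)^{(\alpha)} = \frac{s+1-a}{s+1-a-\alpha}\,(s-a)^{(\alpha)}$ together with the power rule to pass from one step to the next.

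The main obstacle will be the boundary bookkeeping. The telescoped output naturally presents itself as a combination of $(s-a+1)^{(\alpha)}$ and $1^{(\alpha)}$, and reducing this to the compact right-hand side $(s-a)^{(\alpha)}/\alpha$ requires a careful Gamma-function simplification (equivalently, the specific falling-factorial conventions adopted in \cite{almatroud2023variable}). Once that algebraic identity is in hand, telescoping completes the proof in a single line.
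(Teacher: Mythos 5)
The paper itself offers no proof of this statement --- it is quoted verbatim from \cite{almatroud2023variable} --- so there is nothing internal to compare against; your argument has to stand on its own. Your first two steps are correct: the discrete power rule $(t+1)^{(\alpha)}-t^{(\alpha)}=\alpha\,t^{(\alpha-1)}$ follows exactly as you compute, and after the substitution $j=s-k+1$ the sum telescopes to $\frac{1}{\alpha}\bigl[(s-a+1)^{(\alpha)}-1^{(\alpha)}\bigr]$. The genuine gap is the step you defer to ``careful Gamma-function simplification'': with the paper's own convention $t^{(\alpha)}=\Gamma(t+1)/\Gamma(t-\alpha+1)$ (Definition~\ref{Definition 2.1}), the required identity $(s-a+1)^{(\alpha)}-1^{(\alpha)}=(s-a)^{(\alpha)}$ is simply false. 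Already at $s=a+1$ the left-hand side of the theorem is $1^{(\alpha-1)}=\frac{1}{(2-\alpha)\Gamma(2-\alpha)}$ while the right-hand side is $\frac{1^{(\alpha)}}{\alpha}=\frac{1}{\alpha\,\Gamma(2-\alpha)}$, and these agree only when $\alpha=1$. So no amount of boundary bookkeeping will close the argument: the statement, read literally with the falling-factorial convention of this paper, is not true, and your proof breaks at precisely the point you flagged as the ``main obstacle.''

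The identity you are asked to prove is really the \emph{nabla}/rising-factorial version, $t^{\overline{\alpha}}=\Gamma(t+\alpha)/\Gamma(t)$, which is the convention used in the cited source. There your strategy works cleanly and with no leftover boundary term: the nabla power rule gives $j^{\overline{\alpha}}-(j-1)^{\overline{\alpha}}=\alpha\,j^{\overline{\alpha-1}}$, so $\sum_{j=1}^{N}j^{\overline{\alpha-1}}=\frac{1}{\alpha}\bigl[N^{\overline{\alpha}}-0^{\overline{\alpha}}\bigr]=\frac{N^{\overline{\alpha}}}{\alpha}$ because $0^{\overline{\alpha}}=\Gamma(\alpha)/\Gamma(0)=0$. (Alternatively, the correct delta-calculus analogue requires the sum to run over $k$ up to $s-\alpha$ with $s\in\mathbb{N}_{a+\alpha}$, so that the last summand is $(\alpha-1)^{(\alpha-1)}=\Gamma(\alpha)$ and the lower boundary term again vanishes.) To repair your proof you must either switch to the rising-factorial reading of $t^{(\alpha)}$ or note explicitly that the statement is a transcription of the nabla result; as written against Definition~\ref{Definition 2.1}, the final reduction you promise cannot be carried out.
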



\section{Wellposedness of the discrete time fractional model}
\label{sec:3}

In this section, we prove the existence and uniqueness 
of solutions for the given system \eqref{e1} 
(Theorem~\ref{thm:sol:exst:uniq}). Moreover,
we show that the solution is positive 
and bounded (Theorem~\ref{thm:pos}).


\subsection{Existence and uniqueness of solution}   

Using the properties of fractional discrete variable-order calculus, 
we establish the existence and uniqueness of a solution to 
system \eqref{e1}. Now, system \eqref{e1} can be rewritten as,
\begin{eqnarray}
\label{e2}
\left\{
\begin{array}{llllll}
S(t) & = &S(0)+\Delta^{-\alpha(t)}( \delta d(1-rC)+wV-(\mu_{1}+\beta M+n\beta C+k)S), \\
A(t)&=&A(0)+\Delta^{-\alpha(t)}(\beta M+n\beta C)S-(\mu_{1}+\gamma)A, \\
M(t)&=&M(0)+\Delta^{-\alpha(t)} (\gamma A+(\delta dr-\mu_{1}-\mu_{2}-a-m)M), \\
C(t) & = & C(0)+\Delta^{-\alpha(t)}(aM -eC-\mu_{1}C), \\
R(t) & = &R(0)+\Delta^{-\alpha(t)}(mM +eC-\mu_{1}R), \\
V(t) & =&V(0)+\Delta^{-\alpha(t)} (\delta(1-d)+kS-\mu_{1}V-wV ).
\end{array} \right.
\end{eqnarray}
Using the definitions of variable-order calculus, it is easy to obtain 
\begin{eqnarray}
\label{e3}
\begin{array}{llllll}
S(t) & = &S(0)+\dfrac{1}{\Gamma(\alpha(t))}\sum_{s=0}^{t-\alpha(t)} 
(t-\sigma(s))^{(\alpha(t)-1)}( \delta d(1-rC)+wV-(\mu_{1}+\beta M+n\beta C+k)S), \\
A(t)&=&A(0)+\dfrac{1}{\Gamma(\alpha(t))}\sum_{s=0}^{t-\alpha(t)} 
(t-\sigma(s))^{(\alpha(t)-1)}(\beta M+n\beta C)S-(\mu_{1}+\gamma)A, \\
M(t)&=&M(0)+\dfrac{1}{\Gamma(\alpha(t))}\sum_{s=0}^{t-\alpha(t)} 
(t-\sigma(s))^{(\alpha(t)-1)} (\gamma A+(\delta dr-\mu_{1}-\mu_{2}-a-m )M), \\
C(t) & = & C(0)+\dfrac{1}{\Gamma(\alpha(t))}\sum_{s=0}^{t-\alpha(t)} 
(t-\sigma(s))^{(\alpha(t)-1)}(aM -eC-\mu_{1}C),\\
R(t) & = &R(0)+\dfrac{1}{\Gamma(\alpha(t))}\sum_{s=0}^{t-\alpha(t)} 
(t-\sigma(s))^{(\alpha(t)-1)}(mM +eC-\mu_{1}R), \\
V(t) & =&V(0)+\dfrac{1}{\Gamma(\alpha(t))}\sum_{s=0}^{t-\alpha(t)} 
(t-\sigma(s))^{(\alpha(t)-1)} (\delta(1-dC)+kS-\mu_{1}V-wV ).
\end{array}
\end{eqnarray}
We define the kernels $ L_{1}$ ,$L_{2}$, $L_{3}$, $L_{4}$, $L_{5}$ and $ L_{6} $ 
from the RHS of \eqref{e1}: 
\begin{eqnarray}
\label{e4}
\left\{
\begin{array}{llllll}
L_{1}(t,S(t))&=&\delta d(1-rC)+wV-(\mu_{1}+\beta M+n\beta C+k)S, \\
L_{2}(t,A(t))&=&(\beta M+n\beta C)S-(\mu_{1}+\gamma)A, \\
L_{3}(t,M(t))&=&\gamma A+(\delta dr-\mu_{1}-\mu_{2}-a-m )M, \\
L_{4}(t,C(t)) & = &aM -eC-\mu_{1}C, \\
L_{5}(t,R(t)) & = &mM +eC-\mu_{1}R, \\
L_{6}(t,V(t))&=&\delta(1-dC)+kS-\mu_{1}V-wV.
\end{array} \right.
\end{eqnarray}

\begin{Lemma}
\label{l1}
The kernels $ L_{1}$, $L_{2}$, $L_{3}$, 
$L_{4}$, $L_{5}$ and $ L_{6}$ 
satisfy a Lipschitz condition. 
\end{Lemma}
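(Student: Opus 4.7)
The plan is to exploit the fact that each kernel $L_i$ is \emph{affine} in the state variable that appears as its second argument, so that the relevant difference can be computed exactly rather than estimated crudely. For each $i$ I would fix all the other state components, pick two inputs $X$ and $\tilde X$, form $L_i(t,X)-L_i(t,\tilde X)$, and read off a coefficient that becomes the Lipschitz constant after one application of the triangle inequality and after replacing the remaining state components by a uniform a priori bound.

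Because the system \eqref{e1} is later shown to admit positive and bounded solutions (Theorem~\ref{thm:pos}), or equivalently because the fixed-point argument can be localized to a closed ball of some radius $K>0$ in the positive cone, I would fix such a $K$ with $|S|,|A|,|M|,|C|,|R|,|V|\le K$ on the time horizon of interest. Then, for instance,
\begin{equation*}
|L_{1}(t,S)-L_{1}(t,\tilde S)|
= (\mu_{1}+\beta M+n\beta C+k)\,|S-\tilde S|
\le (\mu_{1}+\beta K+n\beta K+k)\,|S-\tilde S|,
\end{equation*}
while for $L_{2},L_{3},L_{4},L_{5},L_{6}$ the dependence on the second argument is through a single constant coefficient, so the bounds are immediate: $|L_{2}(t,A)-L_{2}(t,\tilde A)|\le (\mu_{1}+\gamma)|A-\tilde A|$, $|L_{3}(t,M)-L_{3}(t,\tilde M)|\le |\delta d r -\mu_{1}-\mu_{2}-a-m|\,|M-\tilde M|$, $|L_{4}(t,C)-L_{4}(t,\tilde C)|\le (\mu_{1}+e)|C-\tilde C|$, $|L_{5}(t,R)-L_{5}(t,\tilde R)|\le \mu_{1}|R-\tilde R|$, and $|L_{6}(t,V)-L_{6}(t,\tilde V)|\le (\mu_{1}+w)|V-\tilde V|$. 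Defining $\ell_{i}$ as the coefficient obtained in each estimate and $\ell:=\max_{i}\ell_{i}$ then produces a uniform Lipschitz constant.

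The only genuinely nontrivial case is $L_{1}$, where the products $\beta M S$ and $n\beta CS$ are nonlinear in the full state; however, they are still \emph{linear} in $S$ alone with coefficient bounded by $K$, so they enter the Lipschitz constant only through the a priori bound. I expect the main, essentially bookkeeping, obstacle to be stating precisely how the bound $K$ is fixed at this stage, since positivity and boundedness are only formally established in the next subsection. The cleanest resolution is to phrase the lemma conditionally on working inside the invariant region $\{(S,A,M,C,R,V)\in\mathbb{R}_{+}^{6}\colon S+A+M+C+R+V\le K\}$, which is exactly the region to be used in the Banach fixed-point argument for Theorem~\ref{thm:sol:exst:uniq} that this lemma is designed to feed into.
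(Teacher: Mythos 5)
Your proposal is correct and follows essentially the same route as the paper: each kernel is affine in its own state variable, so the difference $L_i(t,X)-L_i(t,\tilde X)$ factors exactly as a coefficient times $X-\tilde X$, and that coefficient (or its norm) is taken as the Lipschitz constant $m_i$. The one place you are actually more careful than the paper is the kernel $L_1$: the paper simply sets $m_1=\Vert \mu_1+\beta M+n\beta C+k\Vert$ without justifying why this is a finite constant, whereas you make the needed a priori bound $K$ on $M$ and $C$ explicit and correctly flag that it should be tied to the invariant region used in the fixed-point argument.
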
 

\begin{proof}
Suppose $ S(t)$ and $S^{*}(t) $ to be the two unknowns 
of the first equation of \eqref{e4}. Then we have
\begin{eqnarray*}
\Vert L_{1}(t,S(t))-L_{1}(t,S^{*}(t))\Vert 
&=&\Vert(\mu_{1}+\beta M+n\beta C+k)(S(t)-S^{*}(t))\Vert.
\end{eqnarray*} 
Assume that 
$ m_{1} =\Vert\mu_{1}+\beta M+n\beta C+k \Vert$. Then the above can be rewritten as 
\begin{eqnarray*}
\Vert L_{1}(t,S(t))-L_{1}(t,S^{*}(t))\Vert & \leq &m_{1}\Vert S(t)-S^{*}(t)\Vert.
\end{eqnarray*} 
Using the similar procedure for $L_2, L_3, L_4, L_5$ and $L_6$, we get
\begin{eqnarray*}
\Vert L_{2}(t,A(t))-L_{2}(t,A^{*}(t))\Vert & \leq &m_{2}\Vert A(t)-A^{*}(t)\Vert,\nonumber \\
\Vert L_{3}(t,M(t))-L_{3}(t,M^{*}(t))\Vert & \leq &m_{3}\Vert M(t)-M^{*}(t)\Vert,\nonumber \\
\Vert L_{4}(t,C(t))-L_{4}(t,C^{*}(t))\Vert &  \leq &m_{4}\Vert C(t)-C^{*}(t)\Vert,\nonumber  \\
\Vert L_{5}(t,R(t))-L_{5}(t,R^{*}(t))\Vert & \leq &m_{5}\Vert R(t)-R^{*}(t)\Vert,\nonumber \\
\Vert L_{6}(t,V(t))-L_{6}(t,V^{*}(t))\Vert & \leq &m_{6}\Vert V(t)-V^{*}(t)\Vert,
\end{eqnarray*}
where $m_{1}$, $m_{2}$, $m_{3}$, $m_{4}$, $m_{5}$ and $ m_{6} $ are the Lipschitz constants. 
\end{proof}

Now using \eqref{e4} in \eqref{e3}, we get
\begin{eqnarray}
\label{e5}
\left\{
\begin{array}{llllll}
S(t) & = &S(0)+\dfrac{1}{\Gamma(\alpha(t))}\sum_{s=0}^{t-\alpha(t)} (t-\sigma(s))^{(\alpha(t)-1)}L_{1}(s,S(s)), \\
A(t)&=&A(0)+\dfrac{1}{\Gamma(\alpha(t))}\sum_{s=0}^{t-\alpha(t)} (t-\sigma(s))^{(\alpha(t)-1)}L_{2}(s,A(s)), \\
M(t)&=&M(0)+\dfrac{1}{\Gamma(\alpha(t))}\sum_{s=0}^{t-\alpha(t)} (t-\sigma(s))^{(\alpha(t)-1)} L_{3}(s,M(s)), \\
C(t) & = & C(0)+\dfrac{1}{\Gamma(\alpha(t))}\sum_{s=0}^{t-\alpha(t)} (t-\sigma(s))^{(\alpha(t)-1)}L_{4}(s,C(s)),\\
R(t) & = &R(0)+\dfrac{1}{\Gamma(\alpha(t))}\sum_{s=0}^{t-\alpha(t)} (t-\sigma(s))^{(\alpha(t)-1)} L_{5}(s,R(s)), \\
V(t) & =&V(0)+\dfrac{1}{\Gamma(\alpha(t))}\sum_{s=0}^{t-\alpha(t)} (t-\sigma(s))^{(\alpha(t)-1)}  L_{6}(s,V(s)).
\end{array} \right.
 \end{eqnarray} 
Then, using the recursive formula, we get
\begin{eqnarray}
\label{e6}
\left\{
\begin{array}{llllll}
S_{n}(t) & = &S(0)+\dfrac{1}{\Gamma(\alpha(t))}\sum_{s=0}^{t-\alpha(t)} (t-\sigma(s))^{(\alpha(t)-1)}L_{1}(s,S_{n-1}(s)), \\
A_{n}(t)&=&A(0)+\dfrac{1}{\Gamma(\alpha(t))}\sum_{s=0}^{t-\alpha(t)} (t-\sigma(s))^{(\alpha(t)-1)}L_{2}(s,A_{n-1}(s)), \\
M_{n}(t)&=&M(0)+\dfrac{1}{\Gamma(\alpha(t))}\sum_{s=0}^{t-\alpha(t)} (t-\sigma(s))^{(\alpha(t)-1)} L_{3}(s,M_{n-1}(s)), \\
C_{n}(t) & = & C(0)+\dfrac{1}{\Gamma(\alpha(t))}\sum_{s=0}^{t-\alpha(t)} (t-\sigma(s))^{(\alpha(t)-1)}L_{4}(s,C_{n-1}(s)),\\
R_{n}(t) & = &R(0)+\dfrac{1}{\Gamma(\alpha(t))}\sum_{s=0}^{t-\alpha(t)} (t-\sigma(s))^{(\alpha(t)-1)} L_{5}(s,R_{n-1}(s)), \\
V_{n}(t) & =&V(0)+\dfrac{1}{\Gamma(\alpha(t))}\sum_{s=0}^{t-\alpha(t)} (t-\sigma(s))^{(\alpha(t)-1)}  L_{6}(s,V_{n-1}(s)).
\end{array} \right.
\end{eqnarray} 
Subtract the \eqref{e6} with previous successive term and denote the resulting expressions as follows:  
$$
\Phi_{S,n}(t) =S_{n}(t)-S_{n-1}(t),\Phi_{A,n}(t) 
=A_{n}(t)-A_{n-1}(t), \Phi_{M,n}(t) =M_{n}(t)-M_{n-1}(t), 
$$ 
$$ 
\Phi_{R,n}(t) =R_{n}(t)-R_{n-1}(t), \Phi_{V,n}(t) =V_{n}(t)-V_{n-1}(t).
$$ 
Then, we have
\begin{eqnarray}
\Phi_{S,n}(t) & =&	S_{n}(t)-S_{n-1}(t)  
= \dfrac{1}{\Gamma(\alpha(t))}\sum_{s=0}^{t-\alpha(t)} (t-\sigma(s))^{(\alpha(t)-1)}(L_{1}(s,S_{n-1}(s))-L_{1}(s,S_{n-2}(s))),\nonumber \\
\Phi_{A,n}(t) & =&	A_{n}(t)-A_{n-1}(t)  
= \dfrac{1}{\Gamma(\alpha(t))}\sum_{s=0}^{t-\alpha(t)} (t-\sigma(s))^{(\alpha(t)-1)}(L_{2}(s,A_{n-1}(s))-L_{2}(s,A_{n-2}(s))),\nonumber \\
\Phi_{M,n}(t) & =&	M_{n}(t)-M_{n-1}(t)  
= \dfrac{1}{\Gamma(\alpha(t))}\sum_{s=0}^{t-\alpha(t)} (t-\sigma(s))^{(\alpha(t)-1)}(L_{3}(s,M_{n-1}(s))-L_{3}(s,M_{n-2}(s))),\nonumber \\
\Phi_{C,n}(t) & =&	C_{n}(t)-C_{n-1}(t)  
= \dfrac{1}{\Gamma(\alpha(t))}\sum_{s=0}^{t-\alpha(t)} (t-\sigma(s))^{(\alpha(t)-1)}(L_{4}(s,C_{n-1}(s))-L_{4}(s,C_{n-2}(s))),\nonumber \\
\Phi_{R,n}(t) & =&	R_{n}(t)-R_{n-1}(t)  
= \dfrac{1}{\Gamma(\alpha(t))}\sum_{s=0}^{t-\alpha(t)} (t-\sigma(s))^{(\alpha(t)-1)}(L_{5}(s,R_{n-1}(s))-L_{5}(s,R_{n-2}(s))),\nonumber \\
\Phi_{V,n}(t) & =&	V_{n}(t)-V_{n-1}(t)  
= \dfrac{1}{\Gamma(\alpha(t))}\sum_{s=0}^{t-\alpha(t)} (t-\sigma(s))^{(\alpha(t)-1)}(L_{6}(s,V_{n-1}(s))-L_{6}(s,V_{n-2}(s))).\nonumber\\
\end{eqnarray}
In the above,  
\begin{eqnarray}
\label{e7}
\left\{
\begin{array}{llllll}
S_{n}(t) =\sum_{j=0}^{n} \Phi_{S,j}(t) , \quad
A_{n}(t) =\sum_{j=0}^{n} \Phi_{A,j}(t) , \\[3mm]
M_{n}(t)=\sum_{j=0}^{n} \Phi_{M,j}(t) ,\quad
C_{n}(t) =\sum_{j=0}^{n} \Phi_{C,j}(t) , \\[3mm]
R_{n}(t) =\sum_{j=0}^{n} \Phi_{R,j}(t) ,\quad
V_{n}(t)=\sum_{j=0}^{n} \Phi_{V,j}(t).
\end{array} \right.
\end{eqnarray} 
Suppose 
\begin{eqnarray}
\label{e8}
\left\{
\begin{array}{llllll}
\Phi_{S,n-1}(t)=	S_{n-1}(t)-S_{n-2}(t) ,\quad
\Phi_{A,n-1}(t)=	A_{n-1}(t)-A_{n-2}(t)  , \\
\Phi_{M,n-1}(t) =	M_{n-1}(t)-M_{n-2}(t)  ,\quad
\Phi_{C,n-1}(t) =	C_{n-1}(t)-C_{n-2}(t)  , \\
\Phi_{R,n-1}(t)=	R_{n-1}(t)-R_{n-2}(t) ,\quad
\Phi_{V,n-1}(t) =	V_{n-1}(t)-V_{n-2}(t) .\\
\end{array} \right.
\end{eqnarray}
Thus, we obtain 
\begin{eqnarray}
\label{e9}
\left\{
\begin{array}{llllll}
\Vert\Phi_{S,n}(t)\Vert 
& < &	 \dfrac{m_{1}}{\Gamma(\alpha(t))}\sum_{s=0}^{t-\alpha(t)} (t-\sigma(s))^{(\alpha(t)-1)}\Vert\Phi_{S,n-1}(s)\Vert, \\[3mm]
\Vert\Phi_{A,n}(t)\Vert 
& < &	 \dfrac{m_{2}}{\Gamma(\alpha(t))}\sum_{s=0}^{t-\alpha(t)} (t-\sigma(s))^{(\alpha(t)-1)}\Vert\Phi_{A,n-1}(s)\Vert, \\[3mm]
\Vert\Phi_{M,n}(t)\Vert 
& < &	 \dfrac{m_{3}}{\Gamma(\alpha(t))}\sum_{s=0}^{t-\alpha(t)} (t-\sigma(s))^{(\alpha(t)-1)}\Vert\Phi_{M,n-1}(s)\Vert, \\[3mm]
\Vert\Phi_{C,n}(t)\Vert 
& < &	 \dfrac{m_{4}}{\Gamma(\alpha(t))}\sum_{s=0}^{t-\alpha(t)} (t-\sigma(s))^{(\alpha(t)-1)}\Vert\Phi_{C,n-1}(s)\Vert, \\[3mm]
\Vert\Phi_{R,n}(t)\Vert 
& < &	 \dfrac{m_{5}}{\Gamma(\alpha(t))}\sum_{s=0}^{t-\alpha(t)} (t-\sigma(s))^{(\alpha(t)-1)}\Vert\Phi_{R,n-1}(s)\Vert, \\[3mm]
\Vert\Phi_{V,n}(t)\Vert 
& < &	 \dfrac{m_{6}}{\Gamma(\alpha(t))}\sum_{s=0}^{t-\alpha(t)} (t-\sigma(s))^{(\alpha(t)-1)}\Vert\Phi_{V,n-1}(s)\Vert.
\end{array} \right.
\end{eqnarray} 

\begin{thm}
\label{thm:sol:exst:uniq}	
The solution of \eqref{e1} exists for   
$t \in  [0,\mathrm{T}],  $      
if $\displaystyle \frac{km_{i}}{\Gamma(\alpha(t))}  < 1 $, 
$i= 1, \ldots, 6$. Furthermore, the solution is unique if  
$ \Vert \chi(t) \Vert\left(1-\dfrac{km_{i}}{\Gamma(\alpha(t))}\right)> 0$ 
holds for $i= 1, \ldots, 6$. 
\end{thm}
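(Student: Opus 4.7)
The plan is to obtain existence via a Picard-style iterative argument based on the sequences $\{S_n\},\{A_n\},\ldots,\{V_n\}$ introduced in \eqref{e6}, and to derive uniqueness by a contraction estimate applied to the difference of two candidate solutions. Throughout, I would read $k$ in the statement as a uniform upper bound, over $t\in[0,T]$, of the discrete convolution kernel $\sum_{s=0}^{t-\alpha(t)}(t-\sigma(s))^{(\alpha(t)-1)}$; the existence of such a $k$ follows from the closed-form identity recalled at the end of Section~\ref{sec2}, combined with the boundedness of $t\mapsto t^{(\alpha(t))}/\alpha(t)$ on the compact interval $[0,T]$.

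\textbf{Existence.} Iterating the estimates \eqref{e9} and combining the kernel bound with the Lipschitz constants $m_{i}$ from Lemma~\ref{l1}, a straightforward induction on $n$ yields, for each component $X\in\{S,A,M,C,R,V\}$ with the associated Lipschitz constant $m_{i}$,
$$
\Vert\Phi_{X,n}(t)\Vert \leq \left(\frac{km_{i}}{\Gamma(\alpha(t))}\right)^{n}\Vert\Phi_{X,0}(t)\Vert.
$$
Under the standing hypothesis $km_{i}/\Gamma(\alpha(t))<1$, this geometric majorant forces the telescoping series $\sum_{j\geq 0}\Phi_{X,j}$ to converge absolutely and uniformly on $[0,T]$. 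By \eqref{e7} the partial sums coincide with the iterates $X_{n}$, so $\{S_n\},\ldots,\{V_n\}$ converge uniformly to limits $S,A,M,C,R,V$. Passing to the limit in the recursive formula \eqref{e6} via the Lipschitz continuity of the kernels $L_{i}$ shows that these limits satisfy \eqref{e5}, which is equivalent to \eqref{e1}, thereby producing the desired solution on $[0,T]$.

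\textbf{Uniqueness.} Suppose that $(S,A,M,C,R,V)$ and $(S^{*},A^{*},M^{*},C^{*},R^{*},V^{*})$ are two solutions of \eqref{e1} sharing the same initial data, and let $\chi(t)$ denote their componentwise difference. Subtracting the two corresponding representations of the form \eqref{e5}, applying Lemma~\ref{l1}, and dominating the discrete convolution sum by $k$, one obtains, for the susceptible component,
$$
\Vert\chi(t)\Vert \leq \frac{km_{1}}{\Gamma(\alpha(t))}\,\Vert\chi(t)\Vert,
$$
equivalently $\Vert\chi(t)\Vert\bigl(1-km_{1}/\Gamma(\alpha(t))\bigr)\leq 0$. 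The hypothesis that this expression is strictly positive then forces $\chi\equiv 0$, and the same argument applied to the remaining five components with the constants $m_{2},\ldots,m_{6}$ closes the proof.

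The main obstacle I anticipate is securing the uniform kernel bound $k$ in a way that is fully consistent with the variable upper summation limit $t-\alpha(t)$ appearing in Definition~\ref{Definition 2.1}: one must invoke the closed-form identity stated at the end of Section~\ref{sec2} to rewrite $\sum_{s}(t-\sigma(s))^{(\alpha(t)-1)}$ in terms of $t^{(\alpha(t))}/\alpha(t)$, and then control the resulting discrete-factorial expression uniformly for $t\in[0,T]$. Once this single estimate is in place, the rest of the argument is a routine Banach-type contraction scheme guided by the structure of \eqref{e9}.
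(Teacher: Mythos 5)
Your proposal follows essentially the same route as the paper: a Picard-type iteration using the increments $\Phi_{X,n}$ with the geometric bound $(km_{i}/\Gamma(\alpha(t)))^{n}$ for existence, and the contraction inequality $\Vert\chi(t)\Vert\bigl(1-km_{i}/\Gamma(\alpha(t))\bigr)\leq 0$ for uniqueness. In fact you supply details the paper leaves implicit — the identification of $k$ as a uniform bound on the kernel sum via the closed-form identity of Section~\ref{sec2}, the uniform convergence of the telescoping series, and the passage to the limit in \eqref{e6} — so your argument is, if anything, more complete than the published one.
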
  

\begin{proof}
Using the Lemma~\ref{l1} and \eqref{e9}, we get
\begin{eqnarray}
\label{e10}
\left\{
\begin{array}{llllll}
\Vert\Phi_{S,n}(t)\Vert & < &	 \Vert S_{0}(t)\Vert (\dfrac{km_{1}}{\Gamma(\alpha(t))})^{n} , \\
\Vert\Phi_{A,n}(t)\Vert & < &	 \Vert A_{0}(t)\Vert (\dfrac{km_{2}}{\Gamma(\alpha(t))})^{n} , \\
\Vert\Phi_{M,n}(t)\Vert & < &	 \Vert M_{0}(t)\Vert (\dfrac{km_{3}}{\Gamma(\alpha(t))})^{n} , \\
\Vert\Phi_{C,n}(t)\Vert & < &	 \Vert C_{0}(t)\Vert (\dfrac{km_{4}}{\Gamma(\alpha(t))})^{n} , \\
\Vert\Phi_{R,n}(t)\Vert & < &	 \Vert R_{0}(t)\Vert (\dfrac{km_{5}}{\Gamma(\alpha(t))})^{n} , \\
\Vert\Phi_{V,n}(t)\Vert & < &	 \Vert V_{0}(t)\Vert (\dfrac{km_{6}}{\Gamma(\alpha(t))})^{n}.
\end{array} \right.
\end{eqnarray}
As $ n \rightarrow \infty $, we have $ \Vert\Phi_{\cdot,n}(t)\Vert\longrightarrow 0$. 
This shows the existence of solutions to \eqref{e1}. 

Suppose there are two solutions for the system \eqref{e1} 
$(S_{}(t) $, $ A_{}(t) $, $ M_{}(t) $, $ C_{}(t) $, $ R_{}(t) $ , $ V_{}(t))$ 
and $(S_{2}(t) $, $ A_{2}(t) $, $ M_{2}(t)$, $ C_{2}(t) $, $ R_{2}(t) $ , $ V_{2}(t))$. Then,
\begin{eqnarray}
\label{e11}
\left\{
\begin{array}{llllll}
S(t)-S_{2}(t) & =&\dfrac{1}{\Gamma(\alpha(t))}\sum_{s=0}^{t-\alpha(t)} (t-\sigma(s))^{(\alpha(t)-1)}(L_{1}(s,S(s))-L_{1}(s,S_{2}(s))), \\
A(t)-A_{2}(t) & =& \dfrac{1}{\Gamma(\alpha(t))}\sum_{s=0}^{t-\alpha(t)} (t-\sigma(s))^{(\alpha(t)-1)}(L_{2}(s,A(s))-L_{2}(s,A_{2}(s))), \\
M(t)-M_{2}(t)  & =& \dfrac{1}{\Gamma(\alpha(t))}\sum_{s=0}^{t-\alpha(t)} (t-\sigma(s))^{(\alpha(t)-1)}(L_{3}(s,M_(s))-L_{3}(s,M_{2}(s))), \\
C(t)-C_{2}(t)  & =& \dfrac{1}{\Gamma(\alpha(t))}\sum_{s=0}^{t-\alpha(t)} (t-\sigma(s))^{(\alpha(t)-1)}(L_{4}(s,C(s))-L_{4}(s,C_{2}(s))), \\
R(t)-R_{2}(t)  & =& \dfrac{1}{\Gamma(\alpha(t))}\sum_{s=0}^{t-\alpha(t)} (t-\sigma(s))^{(\alpha(t)-1)}(L_{5}(s,R(s))-L_{5}(s,R_{2}(s))), \\
V(t)-V_{2}(t)   & =& \dfrac{1}{\Gamma(\alpha(t))}\sum_{s=0}^{t-\alpha(t)} (t-\sigma(s))^{(\alpha(t)-1)}(L_{6}(s,V(s))-L_{6}(s,V_{2}(s))).
\end{array} \right.
\end{eqnarray} 
From the above, we get
\begin{eqnarray}
\label{e12}
\left\{
\begin{array}{llllll}
\Vert	S(t)-S_{2}(t) \Vert
& =&\Vert\dfrac{1}{\Gamma(\alpha(t))}\sum_{s=0}^{t-\alpha(t)} (t-\sigma(s))^{(\alpha(t)-1)}(L_{1}(s,S(s))-L_{1}(s,S_{2}(s)))\Vert, \\
\Vert	A(t)-A_{2}(t) \Vert 
& =& \Vert\dfrac{1}{\Gamma(\alpha(t))}\sum_{s=0}^{t-\alpha(t)} (t-\sigma(s))^{(\alpha(t)-1)}(L_{2}(s,A(s))-L_{2}(s,A_{2}(s)))\Vert, \\
\Vert M(t)-M_{2}(t) \Vert 
& =& \Vert\dfrac{1}{\Gamma(\alpha(t))}\sum_{s=0}^{t-\alpha(t)} (t-\sigma(s))^{(\alpha(t)-1)}(L_{3}(s,M_(s))-L_{3}(s,M_{2}(s)))\Vert, \\
\Vert	C(t)-C_{2}(t) \Vert  
& =&\Vert \dfrac{1}{\Gamma(\alpha(t))}\sum_{s=0}^{t-\alpha(t)} (t-\sigma(s))^{(\alpha(t)-1)}(L_{4}(s,C(s))-L_{4}(s,C_{2}(s)))\Vert, \\
\Vert R(t)-R_{2}(t) \Vert 
& =& \Vert\dfrac{1}{\Gamma(\alpha(t))}\sum_{s=0}^{t-\alpha(t)} (t-\sigma(s))^{(\alpha(t)-1)}(L_{5}(s,R(s))-L_{5}(s,R_{2}(s)))\Vert, \\
\Vert V(t)-V_{2}(t) \Vert  
& =& \Vert \dfrac{1}{\Gamma(\alpha(t))}\sum_{s=0}^{t-\alpha(t)} (t-\sigma(s))^{(\alpha(t)-1)}(L_{6}(s,V(s))-L_{6}(s,V_{2}(s)))\Vert.
\end{array} \right.
\end{eqnarray} 
The above leads to, 
\begin{eqnarray}
\label{e13}
\left\{
\begin{array}{llllll}
\Vert	S(t)-S_{2}(t) \Vert
& =&\dfrac{1}{\Gamma(\alpha(t))}\sum_{s=0}^{t-\alpha(t)} (t-\sigma(s))^{(\alpha(t)-1)}\Vert(L_{1}(s,S(s))-L_{1}(s,S_{2}(s)))\Vert,\nonumber \\
\Vert	A(t)-A_{2}(t) \Vert 
& =& \dfrac{1}{\Gamma(\alpha(t))}\sum_{s=0}^{t-\alpha(t)} (t-\sigma(s))^{(\alpha(t)-1)}\Vert(L_{2}(s,A(s))-L_{2}(s,A_{2}(s)))\Vert,\nonumber \\
\Vert M(t)-M_{2}(t) \Vert 
& =& \dfrac{1}{\Gamma(\alpha(t))}\sum_{s=0}^{t-\alpha(t)} (t-\sigma(s))^{(\alpha(t)-1)}\Vert(L_{3}(s,M_(s))-L_{3}(s,M_{2}(s)))\Vert,\nonumber \\
\Vert	C(t)-C_{2}(t) \Vert  
& =& \dfrac{1}{\Gamma(\alpha(t))}\sum_{s=0}^{t-\alpha(t)} (t-\sigma(s))^{(\alpha(t)-1)}\Vert(L_{4}(s,C(s))-L_{4}(s,C_{2}(s)))\Vert,\nonumber \\
\Vert R(t)-R_{2}(t) \Vert 
& =& \dfrac{1}{\Gamma(\alpha(t))}\sum_{s=0}^{t-\alpha(t)} (t-\sigma(s))^{(\alpha(t)-1)}\Vert(L_{5}(s,R(s))-L_{5}(s,R_{2}(s)))\Vert,\nonumber \\
\Vert V(t)-V_{2}(t) \Vert  
& =&  \dfrac{1}{\Gamma(\alpha(t))}\sum_{s=0}^{t-\alpha(t)} (t-\sigma(s))^{(\alpha(t)-1)}\Vert(L_{6}(s,V(s))-L_{6}(s,V_{2}(s)))\Vert.
\end{array} \right.
\end{eqnarray} 
Using Lemma~\ref{l1}, we have
\begin{eqnarray}
\label{e14}
\left\{
\begin{array}{llllll}
\Vert	S(t)-S_{2}(t) \Vert& \leq &\dfrac{m_{1}}{\Gamma(\alpha(t))}\sum_{s=0}^{t-\alpha(t)} (t-\sigma(s))^{(\alpha(t)-1)}\Vert	S(t)-S_{2}(t) \Vert, \\
\Vert	A(t)-A_{2}(t) \Vert & \leq & \dfrac{m_{2}}{\Gamma(\alpha(t))}\sum_{s=0}^{t-\alpha(t)} (t-\sigma(s))^{(\alpha(t)-1)}\Vert	A(t)-A_{2}(t) \Vert, \\
\Vert M(t)-M_{2}(t) \Vert & \leq & \dfrac{m_{3}}{\Gamma(\alpha(t))}\sum_{s=0}^{t-\alpha(t)} (t-\sigma(s))^{(\alpha(t)-1)}\Vert M(t)-M_{2}(t) \Vert, \\
\Vert	C(t)-C_{2}(t) \Vert  & \leq & \dfrac{m_{4}}{\Gamma(\alpha(t))}\sum_{s=0}^{t-\alpha(t)} (t-\sigma(s))^{(\alpha(t)-1)}	\Vert	C(t)-C_{2}(t) \Vert , \\
\Vert R(t)-R_{2}(t) \Vert & \leq & \dfrac{m_{5}}{\Gamma(\alpha(t))}\sum_{s=0}^{t-\alpha(t)} (t-\sigma(s))^{(\alpha(t)-1)}\Vert R(t)-R_{2}(t) \Vert, \\
\Vert V(t)-V_{2}(t) \Vert  & \leq &  \dfrac{m_{6}}{\Gamma(\alpha(t))}\sum_{s=0}^{t-\alpha(t)} (t-\sigma(s))^{(\alpha(t)-1)}\Vert V(t)-V_{2}(t) \Vert.
\end{array} \right.
\end{eqnarray} 
As a consequence, we get
\begin{eqnarray}
\label{e15}
\left\{
\begin{array}{llllll}
\Vert	S(t)-S_{2}(t) \Vert \left(1-\dfrac{km_{1}}{\Gamma(\alpha(t))}\right)& \leq &0, \\
\Vert	A(t)-A_{2}(t) \Vert \left(1-\dfrac{km_{2}}{\Gamma(\alpha(t))}\right) & \leq & 0, \\
\Vert M(t)-M_{2}(t) \Vert\left(1-\dfrac{km_{3}}{\Gamma(\alpha(t))}\right) & \leq & 0, \\
\Vert	C(t)-C_{2}(t) \Vert \left(1-\dfrac{km_{4}}{\Gamma(\alpha(t))}\right) & \leq & 0, \\
\Vert R(t)-R_{2}(t) \Vert \left(1-\dfrac{km_{5}}{\Gamma(\alpha(t))}\right)& \leq & 0, \\
\Vert V(t)-V_{2}(t) \Vert \left(1-\dfrac{km_{6}}{\Gamma(\alpha(t))}\right) & \leq &  0.
\end{array} \right.
\end{eqnarray}
The result is proved. 
\end{proof}


\subsection{Non-negativity and boundedness of the solutions}

Now we address the positivity and boundedness of the 
solutions of model \eqref{e1}. The main result here 
establishes that $S(t)$, $A(t)$, $M(t)$, $C(t)$, $R(t)$, 
and $V(t)$ are all positive and bounded.

\begin{rem}
\label{r1}
As $ (1-d) $ represents the efficacy of the successful vaccine 
and $\delta$ is a non-negative parameter, then
$ \delta(1-d) $ is always non-negative.
\end{rem}

\begin{thm}
\label{thm:pos}	
The solutions $S(t)$, $A(t)$, $M(t)$, $C(t)$, $R(t)$ 
and $ V(t)$ of \eqref{e1} are in $ \mathbb{R}_{+}^{6}$
for all $t$.
\end{thm}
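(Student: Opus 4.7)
The plan is to establish positivity first and then boundedness, both compartment by compartment and then collectively for the total population $B(t)$.

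For positivity, the approach is the standard one for fractional systems: examine the boundary of the positive cone. For each state variable, I would evaluate the right-hand side of \eqref{e1} on the hyperplane where that variable vanishes, assuming all other components are non-negative. Concretely, when $S=0$ we get $\Delta^{\alpha(t)}S = \delta d(1-rC) + wV$, when $A=0$ we get $\Delta^{\alpha(t)}A = (\beta M + n\beta C)S$, when $M=0$ we get $\Delta^{\alpha(t)}M = \gamma A$, when $C=0$ we get $\Delta^{\alpha(t)}C = aM$, when $R=0$ we get $\Delta^{\alpha(t)}R = mM+eC$, and when $V=0$ we get $\Delta^{\alpha(t)}V = \delta(1-d)+kS$. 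Five of these are manifestly non-negative whenever the other components are, and the sixth is handled by Remark~\ref{r1} together with $S\ge 0$. By a standard comparison principle for the Caputo-type delta variable-order difference (a non-negative Caputo difference at a zero forces the trajectory not to cross into the negative half-line), we conclude each component remains non-negative. The only delicate item is the $(1-rC)$ factor in the $S$-equation; I would either invoke the smallness of $r$ from Table~\ref{table1} or, more cleanly, first bootstrap a bound $C(t)\le 1/r$ from the $C$-equation before closing the argument for $S$.

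For boundedness I would add the six equations of \eqref{e1}, using linearity of $\Delta^{\alpha(t)}$, to obtain a scalar variable-order difference inequality for $B(t)=S+A+M+C+R+V$. The cross-terms $(\beta M + n\beta C)S$ cancel between the $S$ and $A$ equations, and similarly the $\gamma A$, $aM$, $mM$, $eC$, $kS$ and $wV$ terms appear with opposite signs and cancel. What remains is bounded above by
\begin{equation*}
\Delta^{\alpha(t)} B(t) \le \delta - \mu_1 B(t),
\end{equation*}
since the $-\mu_2 M$ and $-\delta d r C M$ contributions are non-positive on the positive cone. Applying a discrete variable-order Gronwall/comparison lemma to this scalar inequality yields $B(t) \le \max\{B(0),\ \delta/\mu_1\}$, which bounds each component individually because all components are non-negative.

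I would organize the proof of Theorem~\ref{thm:pos} in three short steps: (i) restate the representation formula \eqref{e5} for each compartment together with Definition~\ref{Definition 2.1}, noting that the kernel $(t-\sigma(s))^{(\alpha(t)-1)}/\Gamma(\alpha(t))$ is non-negative for $\alpha(t)\in(0,1)$; (ii) run an induction on the discrete time $t\in\mathbb{N}_0$, showing that if $(S,A,M,C,R,V)(s)\in\mathbb{R}_+^6$ for all $s<t$ then the non-negative combinations identified above force each $(S,A,M,C,R,V)(t)\ge 0$; (iii) conclude boundedness via the aggregated inequality on $B(t)$. The main obstacle is step~(i)--(ii) in its sharpest form: the off-diagonal coupling (e.g., $-(\beta M + n\beta C)S$ in the $S$-equation) means one cannot just inspect the sign of the right-hand side in isolation, and I must use the integral representation plus Lemma~\ref{l1} to convert the sign analysis into a clean induction, while the $-\delta d r C$ term forces the preliminary bound $C\le 1/r$ mentioned above.
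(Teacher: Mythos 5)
Your proposal takes essentially the same route as the paper's own proof: evaluate the right-hand side of each equation of \eqref{e1} on the hyperplane where the corresponding variable vanishes, observe that each of these boundary values is non-negative (using Remark~\ref{r1} for the $V$-equation), and conclude by a barrier/comparison argument for the Caputo variable-order difference. The paper's proof consists of exactly this sign check and nothing more, so your extra care (the induction on discrete time, the comparison lemma, and the observation that the $\delta d(1-rC)$ term needs $C\le 1/r$ before its sign is clear) only strengthens the argument, while your boundedness discussion for $B(t)$ is not needed for this statement --- it is the content of the theorem that immediately follows in the paper.
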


\begin{proof}
Suppose a general fractional variable order discrete time model \eqref{e1} is given as
\begin{eqnarray}
\label{e16}
\left\{
\begin{array}{llllll}
\Delta^{\alpha(t)}S(t)\arrowvert_{S=0} & = &\delta d(1-rC)+wV \ge0, \\
\Delta^{\alpha(t)}A(t)\arrowvert_{A=0} & = &(\beta M+n\beta C)S \ge0, \\
\Delta^{\alpha(t)}M(t)\arrowvert_{M=0} & = &\gamma A \ge0, \\
\Delta^{\alpha(t)}C(t) \arrowvert_{C=0}& = &aM  \ge0, \\
\Delta^{\alpha(t)}R(t) \arrowvert_{R=0}& = &mM +eC \ge0, \\
\Delta^{\alpha(t)}V(t)\arrowvert_{V=0} & =& \delta(1-d)+kS \ge0.
\end{array} \right.
\end{eqnarray} 
Using Remark~\ref{r1}, we get that $S(t),A(t),M(t),C(t),R(t)$ 
and $ V(t) $  are positive.
\end{proof}

\begin{thm}
Let $ B(t) $ be the total population of \eqref{e1}. Then,
$$ B(t)={(S(t),A(t),M(t),C(t),R(t),V(t))}\in \mathbb{R}_{+}^{6}  $$ 
and 
$$ 
0< S(t)+A(t)+M(t)+C(t)+R(t)+V(t) <  \dfrac{\delta}{\mu_{1}}.  
$$	
\end{thm}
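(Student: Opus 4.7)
The plan is to derive a closed scalar inequality for the total population $B(t) = S(t) + A(t) + M(t) + C(t) + R(t) + V(t)$ and then extract the upper bound from it. I would start by summing the six equations of \eqref{e1}. By linearity of $\Delta^{\alpha(t)}$, the left-hand side collapses to $\Delta^{\alpha(t)} B(t)$. On the right-hand side, careful bookkeeping shows that essentially all compartmental flows cancel in pairs: the infection terms $(\beta M + n\beta C)S$ appear with opposite signs between the $S$ and $A$ equations; $\gamma A$ cancels between $A$ and $M$; $aM$ cancels between $M$ and $C$; $mM$ cancels between $M$ and $R$; $eC$ cancels between $C$ and $R$; and finally $kS$ and $wV$ cancel between the $S$ and $V$ equations.

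What remains after collecting the source terms $\delta d(1-rC) + \delta(1-d) = \delta - \delta dr\, C$ and the surviving $\delta dr\, M$ contribution from the $M$-equation is
$$\Delta^{\alpha(t)} B(t) = \delta + (\delta dr - \mu_2) M(t) - \delta dr\, C(t) - \mu_1 B(t).$$
Invoking Theorem~\ref{thm:pos}, all compartments are non-negative, and the parameter values listed in Table~\ref{table1} satisfy $\delta dr \ll \mu_2$, so both $-\delta dr\, C(t)$ and $(\delta dr - \mu_2) M(t)$ are non-positive. Hence
$$\Delta^{\alpha(t)} B(t) \le \delta - \mu_1 B(t).$$
The strict lower bound $B(t) > 0$ is inherited directly from the componentwise positivity of Theorem~\ref{thm:pos}, since the initial data is positive and each component remains in $\mathbb{R}_+$.

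The remaining step is to deduce $B(t) < \delta/\mu_1$ from this fractional difference inequality by a discrete comparison argument: the solution of the linear problem $\Delta^{\alpha(t)} U = \delta - \mu_1 U$ with $U(0) = B(0)$ is dominated by $\max\{B(0),\, \delta/\mu_1\}$, and the desired bound follows provided $B(0) \le \delta/\mu_1$. The main obstacle I anticipate is justifying this comparison rigorously in the variable-order discrete setting, since the classical Mittag-Leffler representation is not directly available; one would instead iterate the integral-form representation analogous to \eqref{e3} for $B$ itself and control the resulting sums inductively using the summation identity recalled at the end of Section~\ref{sec2}. A secondary bookkeeping care point is the asymmetric $\delta dr$ term appearing in the $M$-equation, which must be absorbed via the parameter inequality $\delta dr \le \mu_2$ rather than dropped outright.
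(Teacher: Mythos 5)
Your derivation of the scalar inequality follows the same route as the paper (sum the six equations of \eqref{e1}, let the compartmental flows cancel, and reduce to a single fractional difference inequality for $B$), and your bookkeeping is in fact \emph{more} accurate than the paper's: the paper asserts $\Delta^{\alpha(t)}B=\delta-\mu_{1}B-\mu_{2}M$, which silently discards the $+\delta dr\,M$ term from the $M$-equation and the $-\delta dr\,C$ term coming from $\delta d(1-rC)$. You keep both, and your observation that one must invoke $\delta dr\le\mu_{2}$ (true for the Table~\ref{table1} values, where $\delta dr\approx 7\times10^{-8}$ while $\mu_{2}\approx0.1$) together with the positivity of $M$ and $C$ from Theorem~\ref{thm:pos} is exactly the repair the paper's argument needs to legitimately arrive at $\Delta^{\alpha(t)}B\le\delta-\mu_{1}B$.

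The genuine gap is the final step: you state that $B(t)<\delta/\mu_{1}$ ``follows by a discrete comparison argument,'' anticipate that the Mittag-Leffler machinery is unavailable in the variable-order discrete setting, and propose an inductive iteration of the sum representation --- but you do not carry it out, and that step is the entire analytic content of the second half of the theorem. The paper closes it by formally applying the continuous Laplace transform to the inequality, obtaining
$\tilde{B}(s)<\frac{s^{\alpha(s)-1}}{s^{\alpha(s)}+\mu_{1}}\tilde{B}(0)+\frac{\delta s^{-1}}{s^{\alpha(s)}+\mu_{1}}$,
inverting to Mittag-Leffler functions, and using $\mu_{1}t^{\alpha}E_{\alpha,\alpha+1}(-\mu_{1}t^{\alpha})=1-E_{\alpha,1}(-\mu_{1}t^{\alpha})$ to conclude $B(t)<\delta/\mu_{1}$. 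That route is itself only heuristic here (a symbol such as $s^{\alpha(s)}$ has no clean meaning for a variable-order \emph{discrete} operator, and the monotone inversion of a Laplace-transform inequality requires justification), so your instinct to replace it with a direct discrete Gr\"onwall-type induction is sound --- but until you actually prove that comparison lemma, the upper bound is unestablished. Note also a hypothesis both you and the paper leave implicit: the conclusion $B(t)<\delta/\mu_{1}$ requires $B(0)\le\delta/\mu_{1}$ (the Mittag-Leffler representation makes this visible, since the $E_{\alpha,1}(-\mu_{1}t^{\alpha})\,B(0)$ term must be dominated); with the stated initial data $B(0)=1.00005>\delta/\mu_{1}$ is false only if $\delta/\mu_{1}$ exceeds $1$, so this should be checked and stated explicitly rather than assumed.
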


\begin{proof}
Here $ B(t) $ represents the  total population and 
it is given as $ B=S+A+M+C+R+V$. Then the discrete Caputo 
variable order fractional derivative of the total population is:
$$
\Delta^{\alpha(t)}B(t) =\delta-\mu_{1}B-\mu_{2}M. 
$$
Then, we have 
$$\Delta^{\alpha(t)}B(t) <  \delta-\mu_{1}B. $$
Using the Laplace transform, it is easy to obtain
$$
s^{\alpha(s)}\tilde{B}(s)-s^{\alpha(s)-1}\tilde{B}(0) 
< \dfrac{ \delta}{s}-\mu_{1}\tilde{B}(s).
$$
Then
$$
\tilde{B}(s) <  \dfrac{s^{\alpha(s)-1}}{(s^{\alpha(s)}+\mu_{1})}
\tilde{B}(0)+\dfrac{\delta s^{-1}}{(s^{\alpha(s)}+\mu_{1})}.
$$
Further, using the inverse Laplace transform, we have 
$$
B(s)<  \dfrac{\delta}{\mu_{1}}(\mu_{1}t^{\alpha(t)}E_{\alpha(t),
\alpha(t+1)}(-\mu_{1}t^{\alpha(t)}))
+E_{\alpha(t),1}(-\mu_{1}t^{\alpha(t)})< \dfrac{\delta}{\mu_{1}}.
$$
This completes the proof. 
\end{proof}


\section{Stability analysis of the fractional hepatitis-B virus model}
\label{sec:4}
		 
The epidemiological model \eqref{e1} is analyzed for equilibrium points: 
the disease-free state denoted by $ B_{1} $ and endemic states denoted 
by $ B_{2} $. These equilibrium points can be obtained by setting
$$
\Delta^{\alpha(t)}S(t)=\Delta^{\alpha(t)}A(t)
= \Delta^{\alpha(t)}M(t)=\Delta^{\alpha(t)}C(t)
=\Delta^{\alpha(t)}R(t) =\Delta^{\alpha(t)}V(t)=0. 
$$ 
Using the above conditions in \eqref{e1}, we get
\begin{eqnarray}
\label{e17}
\left\{
\begin{array}{llllll}
\delta d(1-rC)+wV-(\mu_{1}+\beta M+n\beta C+k)S & = &0, \\
(\beta M+n\beta C)S-(\mu_{1}+\gamma)A& = &0, \\
\gamma A+(\delta dr-\mu_{1}-\mu_{2}-a-m)M& = &0, \\
aM -eC-\mu_{1}C& = &0, \\
mM +eC-\mu_{1}R& = &0, \\
\delta(1-d)+kS-\mu_{1}V-wV & = &0.
\end{array} \right.
\end{eqnarray}
We obtain the disease free equilibrium point as given below: 
$$
B_{1}=\left(\frac{\delta\left(d \mu_{1}+w\right)}{\mu_{1}
\left(\mu_{1}+k+w\right)}, 0,0,0,0, \frac{\delta\left(\mu_{1}
+k-\mu_{1} d\right)}{\mu_{1}\left(\mu_{1}+k+w\right)}\right).
$$
Further, a unique endemic equilibrium is given as 
$B_{2}=\left(S_{2}, A_{2}, M_{2}, C_{2}, R_{2}, V_{2}\right)$, where
\begin{equation}
\begin{array}{rl}
S_{2}&=\displaystyle\frac{\left(-\delta dr +\mu_{1}+\mu_{2}
+a+m\right)\left(\mu_{1}+\gamma\right)
\left(e+\mu_{1}\right)}{\gamma\beta\left(e+\mu_{1}+na\right)} ;\\[4mm]
V_{2}&=\displaystyle\frac{\delta-\delta d+k S_{2}}{\mu_{1}+w} ;\\[4mm]
C_2&=\displaystyle\frac{(\delta d
+wV_2-(\mu_{1}+k)S_2)a}{ar \delta d+S_2\beta(e+\mu_1+na)};\\[4mm]
A_2&=\displaystyle\frac{(\delta dr-\mu_1-\mu_2-a-m)(e+\mu_1)}{a\gamma}C_2;\\[4mm]
M_2&=\displaystyle\frac{e+\mu_1}{a}C_2;\\[4mm]
R_2&=\displaystyle\frac{m(e+\mu_1)+ea}{a\mu_1}C_2.
\end{array}
\end{equation}
To derive the expression for the reproduction number $ \mathcal R_{0} $ 
of the system \eqref{e1}, we employ the method outlined 
in \cite{yavuznew}. The necessary matrices are obtained as follows:
$$
\mathcal{F}=\left(\begin{array}{c}
0 \\
\left(\beta M+n\beta C\right) S \\
0 \\
0 \\
0 \\
0
\end{array}\right), 
\quad \mathcal{V}=\left(\begin{array}{c}
-\delta d(1-rC)+wV+Q_1 S \\
\left(\mu_{1}+\gamma \right) A \\
-\gamma A-\delta dr M+Q_2 M \\
-aM +eC+\mu_{1}C \\
- mM -eC+\mu_{1}R \\
V\left(\mu_{1}+w\right) -k S-C\left(1-d\right)
\end{array}\right),
$$
where $Q_1=\mu_{1}+\beta M+n\beta C+k $ and $Q_2=\mu_{1}+\mu_{2}+a+m $.
Now, for linearization, the Jacobian matrix of the upper matrix 
at the disease-free state is given by  
$$
\mathbf{F}=\left(\begin{array}{cccc} 
0 & \beta S_0 &n \beta S_0 &0\\
0 & 0 & 0 &0\\
0 & 0 & 0 &0\\0 & 0 & 0 &0
\end{array}\right), \quad \mathbf{V}
=\left(
\begin{array}{cccc}
\mu_{1}+\gamma  & 0 & 0  &0\\
-\gamma & \mu_{1}+\mu_{2}+a+m-\delta dr & 0  &0\\
0 & -a & e+ \mu_{1} &0\\
0&-m&-e&\mu_1
\end{array}\right) .
$$
So the spectral of the matrix $F V^{-1}$ 
is the expression of $\mathcal{R}_0$ and written as follows:
$$
\mathcal{R}_0=\frac{\delta \beta \gamma\left(\mu_{1}+e+n a\right)
\left(w+d \mu_{1}\right)}{\mu_{1}\left(\mu_{1}+\gamma\right)\left(\mu_{1}
+e\right)\left(w+k+\mu_{1}\right)\left(\mu_{1}+\mu_{2}+a+m-r \delta d)\right.}.
$$ 

We now investigate the local stability. 

\begin{thm}
The disease-free equilibrium $ B_{1} $ of the proposed discrete fractional 
variable-order hepatitis B model \eqref{e1} is locally 
asymptotically stable if $\mathcal{R}_0 < 1$. 
\end{thm}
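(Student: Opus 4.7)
The plan is to linearise \eqref{e1} at $B_1$ and invoke the spectral criterion recalled in Section~\ref{sec2}. The first observation is structural: in the variable ordering $(A,M,C,R,S,V)$, the Jacobian $J(B_1)$ is block lower-triangular. Indeed, since $A=M=C=R=0$ at the disease-free state, every partial derivative of an infected equation with respect to $S$ or $V$ vanishes at $B_1$ (the only nonlinear coupling comes from $(\beta M+n\beta C)S$, whose $S$-derivative is $\beta M+n\beta C\equiv 0$ there). Hence the spectrum of $J(B_1)$ is the disjoint union of the spectra of a $4\times 4$ ``infected'' block $J_{\text{inf}}$ and a $2\times 2$ ``uninfected'' block $J_{\text{un}}$.

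The uninfected block is
$$
J_{\text{un}}=\begin{pmatrix} -(\mu_1+k) & w \\ k & -(\mu_1+w) \end{pmatrix},
$$
with negative trace $-(2\mu_1+k+w)$ and positive determinant $\mu_1(\mu_1+k+w)$; its two eigenvalues therefore lie unconditionally in the open left half-plane. For the infected block a direct computation gives $J_{\text{inf}}=\mathbf{F}-\mathbf{V}$, where $\mathbf{F}$ and $\mathbf{V}$ are the matrices displayed right before $\mathcal{R}_0$ is introduced. Since the $R$-row couples only through $-\mu_1 R$, it immediately contributes an eigenvalue $-\mu_1<0$, and the remaining spectral information sits in the $3\times 3$ principal submatrix on $(A,M,C)$.

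The crux is to translate $\mathcal{R}_0<1$ into negativity of the real parts of the eigenvalues of $\mathbf{F}-\mathbf{V}$. Here I would invoke the next-generation-matrix lemma of van den Driessche and Watmough: whenever $\mathbf{F}\ge 0$ entrywise and $\mathbf{V}$ is a non-singular M-matrix, one has the equivalence
$$
\rho(\mathbf{F}\mathbf{V}^{-1})<1\iff \text{every eigenvalue of }\mathbf{F}-\mathbf{V}\text{ has strictly negative real part.}
$$
Since $\mathcal{R}_0=\rho(\mathbf{F}\mathbf{V}^{-1})$ by construction, the hypothesis $\mathcal{R}_0<1$ forces all eigenvalues of $J_{\text{inf}}$ into the open left half-plane; combining with $J_{\text{un}}$, every eigenvalue $\lambda_i$ of $J(B_1)$ has negative real part and the Ghaziani--Petras spectral condition is met for every admissible $\underline{\alpha}\in(0,1)$. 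The main technical obstacle I anticipate is the M-matrix verification for $\mathbf{V}$: off-diagonal non-positivity is immediate from its form, but positivity of the leading principal minors reduces to the inequality $\mu_1+\mu_2+a+m-r\delta d>0$ that is implicit in the denominator of $\mathcal{R}_0$. Once this sign condition is confirmed, the remainder of the proof is routine algebra together with the standard reproduction-number equivalence.
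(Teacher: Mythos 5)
Your argument is correct, and its skeleton matches the paper's: both decompositions isolate the eigenvalue $-\mu_1$ coming from the $R$-equation, a $2\times 2$ uninfected $(S,V)$ block whose quadratic has positive coefficients unconditionally, and a $3\times 3$ (your $4\times4$ infected block minus the $R$-direction) block on $(A,M,C)$ carrying all the $\mathcal{R}_0$-dependence — this is exactly the paper's factorization $(\lambda+\mu_1)(\lambda^2+a_1\lambda+a_2)(\lambda^3+b_1\lambda^2+b_2\lambda+b_3)=0$. Where you genuinely diverge is in handling the cubic: the paper applies the Routh--Hurwitz criterion to its explicit coefficients and asserts $b_1>0$, $b_3>0$ and $b_1b_2>b_3$ under $\mathcal{R}_0<1$, whereas you invoke the van den Driessche--Watmough equivalence $\rho(\mathbf{F}\mathbf{V}^{-1})<1\iff s(\mathbf{F}-\mathbf{V})<0$ after checking that $\mathbf{F}\ge 0$ and that $\mathbf{V}$ (lower triangular with nonpositive off-diagonal entries and positive diagonal, granted $\mu_1+\mu_2+a+m-\delta dr>0$) is a nonsingular M-matrix. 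Your route is arguably cleaner: it makes the role of $\mathcal{R}_0<1$ transparent (only the condition $b_3>0$, i.e.\ the determinant sign, actually encodes $\mathcal{R}_0<1$; the remaining Routh--Hurwitz inequalities $b_1>0$ and $b_1b_2>b_3$ are nontrivial and the paper does not verify them), it yields the instability converse for free, and it correctly flags the sign condition on $\mathbf{V}$ that both arguments silently need. The cost is importing an external lemma rather than staying with elementary polynomial criteria. One caution: like the paper, you conclude from $\operatorname{Re}(\lambda_i)<0$; strictly speaking the stability theorem quoted in Section~\ref{sec2} is stated with the inequality $\vert\arg(\lambda_i)\vert<\frac{\pi}{2}\underline{\alpha}$, which as printed would not be implied by negative real parts — you are (reasonably) reading it with the standard Matignon-type orientation $\vert\arg(\lambda_i)\vert>\frac{\pi}{2}\overline{\alpha}$, which negative real parts do satisfy for every $\overline{\alpha}<1$.
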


\begin{proof}
The Jacobian of model at $B_{1}$ given by 
$$
\mathcal{J}\left(B_{1}\right)
=\left(\begin{array}{cccccc}
-\left(\mu_{1}+k\right) & 0 & -\beta S_0& -rd\delta-n \beta S_0 & 0 & w \\
0 & -\left(\mu_{1}+\gamma\right) & \beta S_0 & n \beta S_0 & 0 & 0 \\
0 & \gamma& \delta dr-\left(\mu_{1}+\mu_{2}+a+m\right) & 0 & 0 & 0 \\
0 & 0 & a & -\left(e+\mu_{1}\right) & 0 & 0 \\
0 & 0 & m & e & -\mu_{1} & 0 \\
k & 0 & 0 & 0 & 0 & -\left(\mu_{1}+w\right)
\end{array}\right) .
$$
The characteristic equation is
$$
(\lambda+\mu_1)(\lambda^2+a_1\lambda+a_2)(\lambda^3+b_1\lambda^2+b_2\lambda+b_3)=0,
$$
where
$$
a_1=2\mu_1+k+w;~~a_2=(\mu_1+k)(\mu_1+w)+kw;~~b_1=a+e+\gamma+m+3\mu_{1}-\mu_2-\delta dr;
$$
$$
b_2=-(\gamma \beta S_0+(\mu_1+\gamma)(\delta dr-Q_2)+(\delta dr-Q_2)(e+\mu_1)-(\mu_1+\gamma)(e+\mu_1));
$$
$$
b_3=-\gamma \beta S_0(na+e+\mu_1)+(\mu_1+\gamma)(\delta dr-Q_2)(e+\mu_1),
$$
here $Q_2=\left(\mu_{1}+\mu_{2}+a+m\right)$.
By the Routh--Hurwitz criterion, 
$  a_{1}>0 $, $a_{2}>0$, $b_{1}>0$, $b_{3}>0 $, $b_{1}b_{2} > b_{3}$, 
and if  $ \mathcal{R}_0 <1$, then  $ B_{1} $ is  asymptotically stable. 
\end{proof}


\section{Numerical results}
\label{sec:5}

This section provides a discussion on fractional Caputo's 
hepatitis B virus model \eqref{e1}. We utilize the numerical 
values presented in Table~\ref{table1} for our analysis. 
The considered hepatitis B virus model with a fractional variable order 
in discrete time is solved numerically utilizing the methodology outlined 
in \cite{bakare2014optimal} and \cite{gao2021stability}.

The initial conditions for the system \eqref{e1} are set as follows: 
$S(0) = 0.8565$, $A(0) = 0.01223$, $M(0) = 0.10675$, $C(0) = 0.00489$, 
$R(0) = 0.00734$, and $V(0) = 0.01224$, as in \cite{yavuznew}.
We consider different values for the fractional variable order, 
denoted as $\alpha(t)$:
$$
\alpha_{1}  =\left(
\begin{array}{ccc} 
0.95 & 0.9 &0.85 \\
\end{array}\right),  
\quad
\alpha_{2}  =
\left(\begin{array}{ccc}
0.93 & 0.84 &0.88 \\
\end{array}\right) ,  
\quad
\alpha_{3}  =\left(
\begin{array}{ccc}
0.78 & 0.8 &0.82 \\
\end{array}\right).
$$ 

\begin{figure}
\includegraphics[scale=0.8]{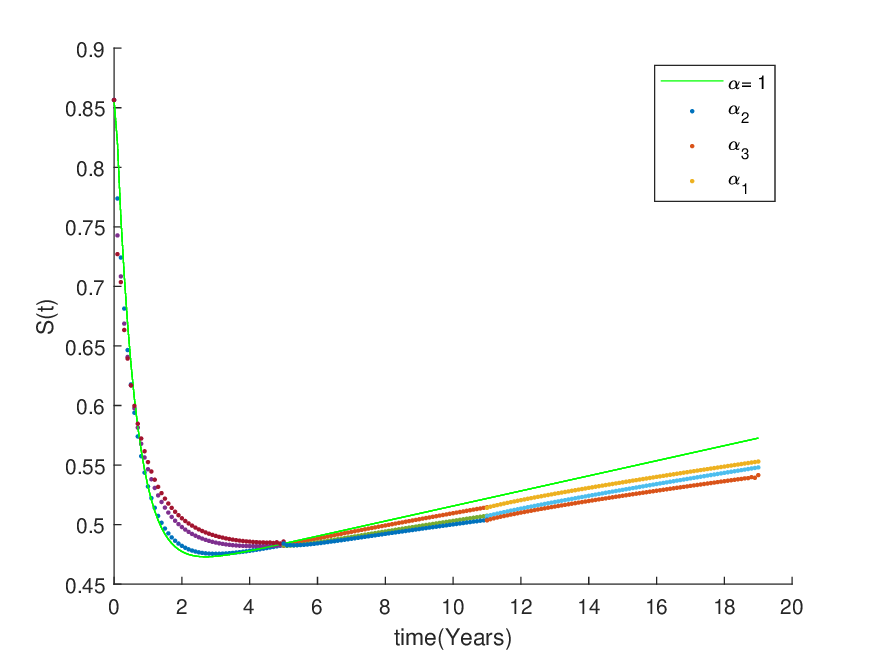}
\caption{\label{fig:s}
The graph illustrates the dynamics 
of the susceptible population $(S(t))$ with respect to the 
fractional variable-order parameter $\alpha(k)$.}
\end{figure}

\begin{figure}
\includegraphics[scale=0.8]{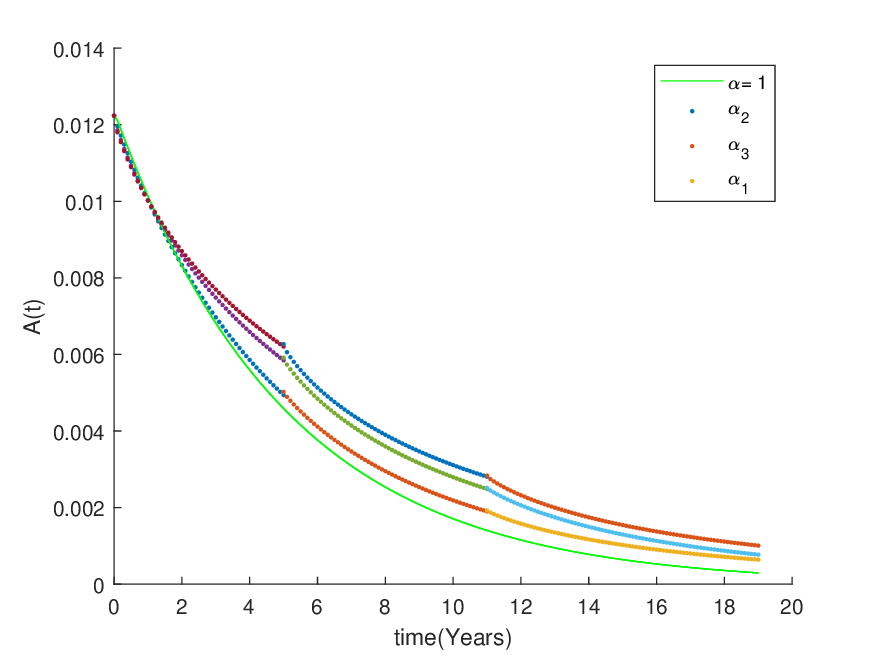}
\caption{\label{fig:a}
The graph illustrates the dynamics of the acute individual 
class $(A(t))$ with respect to the fractional variable-order parameter $\alpha(k)$.}
\end{figure}

\begin{figure}
\includegraphics[scale=0.8]{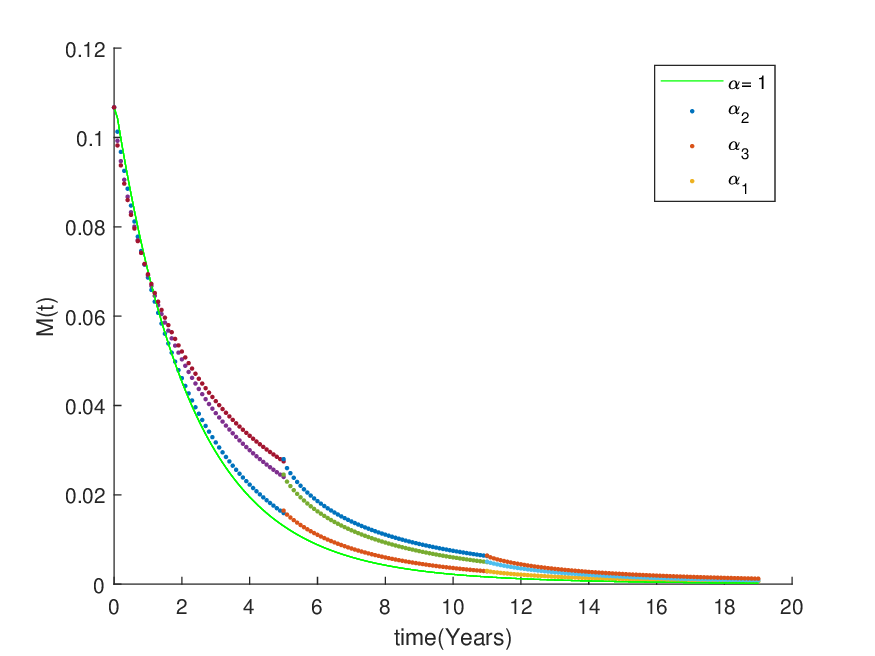}
\caption{\label{fig:m}
The graph illustrates the dynamics of the immune individuals 
class $(M(t))$ with respect to the fractional 
variable-order parameter $\alpha(k)$.}
\end{figure}

The results are presented in Figs.~\ref{fig:s}--\ref{fig:v},  
showcasing the dynamic behaviors of the susceptible, 
acute, immune, chronic, recovered, and vaccinated population groups. 

Fig.~\ref{fig:s} illustrates that as the value of $ \alpha $ increases, 
the number of susceptible individuals also increases, indicating 
a proportional relationship. Conversely, Figs.~\ref{fig:a} and \ref{fig:m} 
show an inverse proportionality between the fractional order and the number 
of acute and immune individuals, respectively. Notably, fractional and 
integer order cases tend to converge and stabilize over time. 
In Fig.~\ref{fig:c}, for fractional values, the peak in carrier 
individuals is observed in the year seven, whereas for 
$ \alpha=1 $, the peak occurs in earlier years. 
Fig.~\ref{fig:r} shows an increase in the value of $ \alpha $ 
after two years. Fig.~\ref{fig:v} reveals the dynamics of the 
vaccinated population, indicating a direct relationship between 
the fractional-order parameter $ \alpha(k) $ and the vaccinated population.

The computational results demonstrate that, 
by incorporating discrete time and variable fractional orders, 
the hepatitis B virus model is capable of capturing richer 
and more complex dynamics. Moreover, it offers valuable insights 
and enhances understanding of the model. Various morphological changes 
in the unknown variables, due to the influence of discrete time 
and the variable fractional order, are visible and comparable, 
as illustrated in Fig.~\ref{fig:s}--\ref{fig:v}.

\begin{figure}
\includegraphics[scale=0.8]{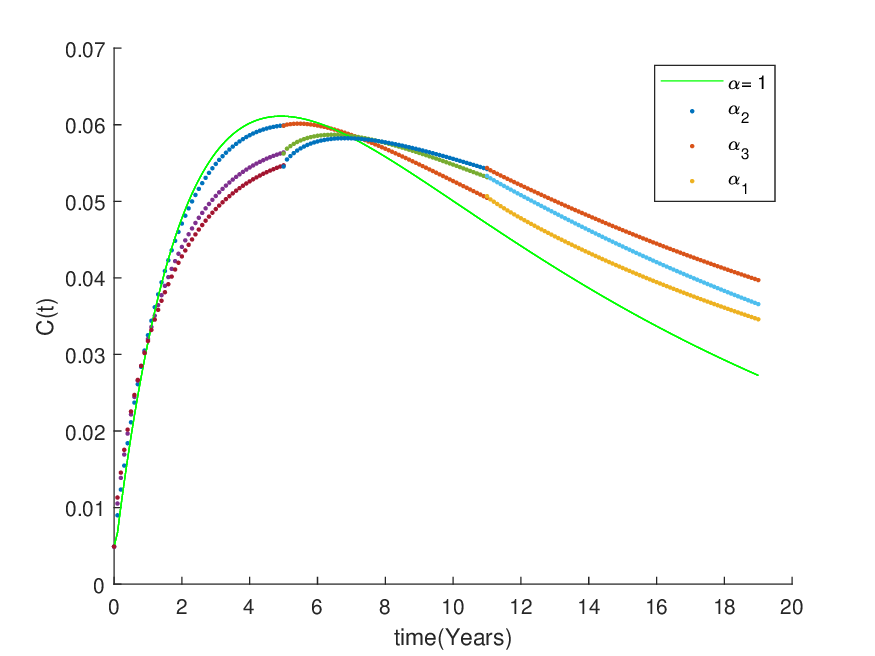}
\caption{\label{fig:c}
The graph illustrates the dynamics of the carrier (chronic) 
individual class $(C(t))$ with respect to the 
fractional variable-order parameter $\alpha(k)$.}
\end{figure}

\begin{figure}
\includegraphics[scale=0.8]{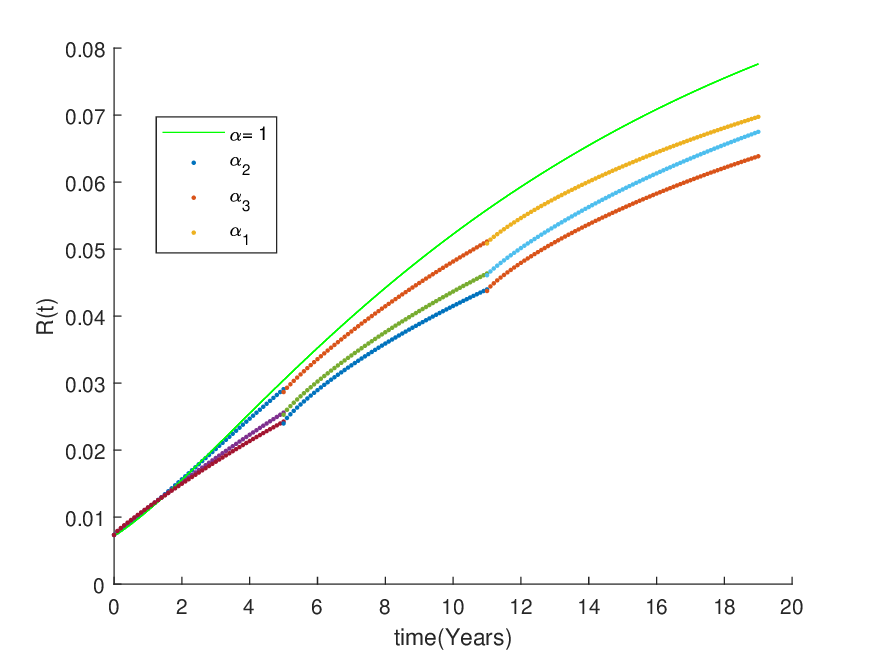}
\caption{\label{fig:r}
The graph illustrates the dynamics of the recovered individuals  
class $(R(t))$ with respect to the fractional variable-order parameter $\alpha(k)$.}
\end{figure}

\begin{figure}
\label{f6}	
\includegraphics[scale=0.8]{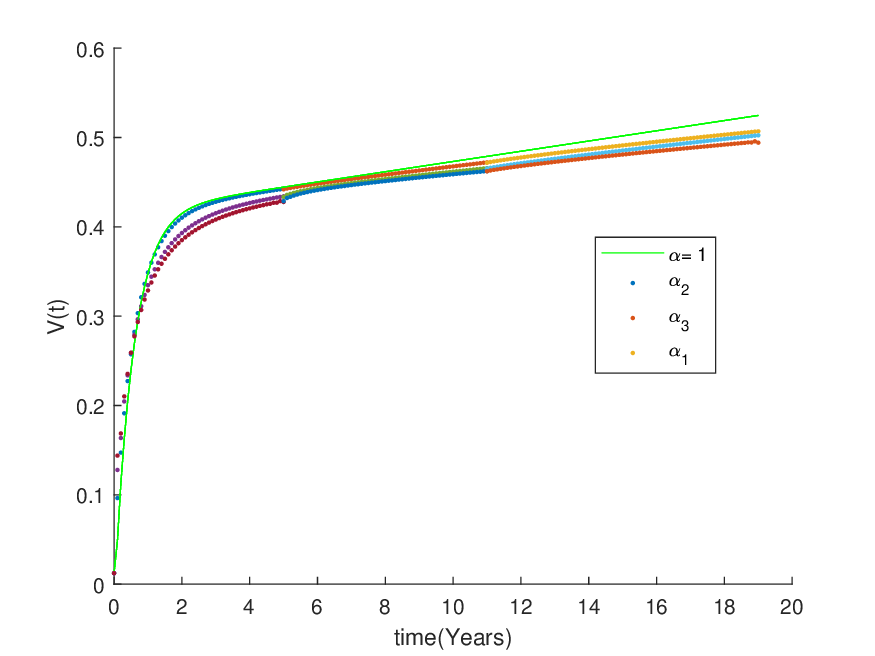}
\caption{\label{fig:v}
The graph illustrates the dynamics of the vaccination class $(V(t))$ 
with respect to the fractional variable-order parameter $\alpha(k)$.}
\end{figure}


\section{Conclusion}
\label{sec:6}

In this study, we have established the existence and uniqueness 
of a solution for a novel discrete fractional variable-order model 
of the hepatitis-B virus. Initially, we ensured the positivity 
and boundedness of model's solutions. To further investigate the 
local stability of the proposed system, we computed the basic reproduction 
number $\mathcal{R}_0$. If $\mathcal{R}_0 < 1$, we have proven that 
the disease-free equilibrium $B_{1}$ is locally asymptotically stable.
To illustrate the dynamics of the model under various 
$\alpha(t)$ values, we have used 
Matlab to generate dynamic graphs for the proposed discrete 
fractional variable-order hepatitis-B virus model.
Our results show that, due to the incorporation 
of discrete time and variable fractional orders,
the new model is able to cover more rich 
and more complex dynamics when compared to 
models available in the literature,
providing valuable insights. 


~\\
\textbf{CRediT authorship contribution statement}\\
M. Boukhobza: Writing - original draft, Software, Data curation, Resources; 
A. Debbouche: Writing - review $\&$ editing, Validation, Project administration,  Supervision; 
L. Shangerganesh: Conceptualization, Formal analysis, Visualization, Investigation;
D.F.M. Torres: Writing - review $\&$ editing, Methodology, Validation, Funding acquisition.\\
~\\
\textbf{Declaration of competing interest}\\
The authors declare that they have no known competing financial
interests or personal relationships that could have appeared to influence
the work reported in this paper.\\
~\\
\textbf{Acknowledgments}\\
The authors would like to express their sincere thanks 
and respect to the editors and the anonymous reviewers. This paper belongs 
to the first author PhD program under supervision of A. Debbouche 
in collaboration with 3rd and 4th co-authors.
Debbouche and Torres were supported by 
\emph{Funda\c{c}\~{a}o para a Ci\^{e}ncia e a Tecnologia} (FCT) 
within project number UIDB/04106/2020 (CIDMA).


\small


\end{document}